\documentclass[acmsmall,nonacm]{acmart}


\usepackage[shortlabels]{enumitem}
\usepackage{subcaption}
\usepackage{wrapfig}

\DeclareMathOperator*{\E}{\mathbb{E}}
\let\Pr\relax
\DeclareMathOperator*{\Pr}{\mathbb{P}}

\DeclareMathOperator*{\Cov}{\mathrm{Cov}}
\newtheorem{claim}{Claim}

\newtheorem{assumption}{Assumption}

\newcommand{\sizeof}[1]{\left\lvert{#1}\right\rvert}

\newcommand{\floor}[1]{\left\lfloor{#1}\right\rfloor}

\newcommand{\edit}[1]{\textcolor{magenta}{#1}}

\AtBeginDocument{%
  }

\setcopyright{none}
\copyrightyear{2022}
\acmYear{2022}
\acmDOI{XXXXXXX.XXXXXXX}

\acmConference[SIGMETRICS '23]{Make sure to enter the correct
  conference title from your rights confirmation emai}{June 19--23,
  2023}{Orlando, FL}
\acmPrice{15.00}
\acmISBN{978-1-4503-XXXX-X/18/06}




\begin{document}

\title{Detecting TCP Packet Reordering in the Data Plane}


\author{Yufei Zheng}
\email{yufei@cs.princeton.edu}
\affiliation{%
  \institution{Princeton University}
  \streetaddress{Department of Computer Science}
  \city{Princeton}
  \state{New Jersey}
  \country{USA}
  \postcode{08540}
}

\author{Huacheng Yu}
\email{yuhch123@gmail.com}
\affiliation{%
  \institution{Princeton University}
  \streetaddress{Department of Computer Science}
  \city{Princeton}
  \state{New Jersey}
  \country{USA}
  \postcode{08540}
}

\author{Jennifer Rexford}
\email{jrex@cs.princeton.edu}
\affiliation{%
  \institution{Princeton University}
  \streetaddress{Department of Computer Science}
  \city{Princeton}
  \state{New Jersey}
  \country{USA}
  \postcode{08540}
}

\renewcommand{\shortauthors}{Zheng, Yu and Rexford}

\begin{abstract}
Network administrators want to  detect TCP-level packet reordering to diagnose performance problems and attacks.
However, reordering is expensive to measure, because each packet must be processed relative to the TCP sequence number of its predecessor in the same flow.
Due to the volume of traffic, detection  should take place in the data plane as the packets fly by.
However, restrictions on the memory size and the number of memory accesses per packet make it impossible to design an efficient algorithm for pinpointing flows with heavy packet reordering.
In practice, packet reordering is typically a property of a \emph{network path}, due to a congested or flaky link.
Flows traversing the same path 
are correlated in their out-of-orderness, and aggregating out-of-order statistics 
at the IP prefix level provides useful diagnostic information.
In this paper, we present efficient algorithms
for identifying \emph{IP prefixes} with heavy packet reordering under memory restrictions. 
First, we sample as many flows as possible, regardless of their sizes, but only for a short period at a time.
Next, we separately monitor the large flows over long periods, in addition to the flow sampling.
In both algorithms, we measure at the flow level, and aggregate statistics and allocate memory at the prefix level.
Our simulation experiments, using packet traces from campus and backbone networks, and our P4 prototype show that our algorithms correctly identify $80\%$ of the prefixes with heavy packet reordering using moderate memory resources.
\end{abstract}

\begin{CCSXML}
<ccs2012>
 <concept>
  <concept_id>10010520.10010553.10010562</concept_id>
  <concept_desc>Computer systems organization~Embedded systems</concept_desc>
  <concept_significance>500</concept_significance>
 </concept>
 <concept>
  <concept_id>10010520.10010575.10010755</concept_id>
  <concept_desc>Computer systems organization~Redundancy</concept_desc>
  <concept_significance>300</concept_significance>
 </concept>
 <concept>
  <concept_id>10010520.10010553.10010554</concept_id>
  <concept_desc>Computer systems organization~Robotics</concept_desc>
  <concept_significance>100</concept_significance>
 </concept>
 <concept>
  <concept_id>10003033.10003083.10003095</concept_id>
  <concept_desc>Networks~Network reliability</concept_desc>
  <concept_significance>100</concept_significance>
 </concept>
</ccs2012>
\end{CCSXML}



\maketitle

\section{Introduction}

Transmission Control Protocol (TCP) performance problems are often associated with packet reordering. Packet loss, commonly caused by congested links, triggers TCP senders to retransmit packets, leading these retransmitted packets to appear out of order.
Also, the network itself can cause packet reordering, due to malfunctioning equipment or traffic splitting over multiple links~\cite{paxson1997end}.  TCP overreacts to inadvertent reordering by retransmitting packets that were not actually lost and erroneously reducing the sending rate~\cite{blanton2002making,paxson1997end}.  In addition,  reordering of acknowledgment packets muddles TCP's self-clocking property and induces bursts of traffic~\cite{bennett1999packet}.
Perhaps more strikingly, reordering can be a form of denial-of-service (DoS) attack. 
In this scenario, an adversary persistently reorders existing packets, or injects malicious reordering into the network, to make the goodput low or even close to zero, despite delivering all of the packets~\cite{aad2008impact,herzberg2010stealth}.

To diagnose performance problems and neutralize attacks, it is  crucial to detect packet reordering quickly and efficiently, e.g., on the order of minutes.
Due to the sheer volume of traffic, the detection of packet reordering should take place in the data plane of network devices as the packets fly by.
This is because each packet must be processed in conjunction with its predecessor in the same flow, which renders simple packet sampling (e.g., widely used technologies like NetFlow~\cite{netflow} and sFlow\cite{sflow}) insufficient.
For example, sampling one in a thousand packets, let alone one in a million, would rarely ever capture consecutive packets of the same flow.

Fortunately, simple packet-reordering statistics can be collected directly as part of high-speed packet processing, given the
emergence of programmable data planes, including software platforms like eBPF~\cite{ebpf} and DPDK~\cite{dpdk}, smart network interface cards~\cite{pensando-nic,xilinx}, and ASIC-based switches~\cite{broadcom,tofino,penando-switch}.
With flexible parsing, we can extract the header fields we need to analyze the packets in a flow, including the TCP flow identifier (source and destination IP addresses and port numbers), the TCP sequence number, and the TCP segment length.
Using arrays or dictionaries, we can keep state across successive packets of the same flow.
In addition, simple arithmetic operations allow us to detect reordering and count the number of out-of-order packets in a flow.

However, processing packets efficiently for high link speeds imposes significant constraints on memory: 

\begin{itemize}
    \item \textbf{Memory size:} Modern data planes have a limited amount of memory, especially compared to the number of concurrent flows on high-speed links.
    
    \item \textbf{Memory accesses:} Since memory bandwidth has not kept pace with link bandwidth, modern data planes can only access memory a few times per packet.
\end{itemize}

\noindent
Plus, network devices perform other tasks---packet forwarding, access control, and so on---that demand a share of the already limited memory resources. Furthermore, since the data plane has limited bandwidth for communicating with the control-plane software, we cannot offload monitoring tasks to the control plane. As such, we need to design compact data structures that work within these constraints.

The limitations on memory size and accesses make it fundamentally difficult to pinpoint individual flows with a large proportion of out-of-order packets.
Yet, identifying every affected flow is not necessarily what is important for network administrators. Packet reordering is typically a property of a network path, due to congested or flaky links.
As such, it is useful to report reordering at a coarser level, such as to identify the IP prefixes associated with performance problems.
Since routing is determined at the IP prefix level, 
a network administrator could choose to route the traffic for an IP prefix through providers whose paths are not experiencing significant reordering.
However, this does \emph{not} obviate the need to maintain state for at least some flows, as TCP packet reordering is still a flow-level phenomenon.
Fortunately, we can identify prefixes with heavy packet reordering without needing to track \emph{all} of the flows, because
packets traversing the same path at the same time are often correlated in their out-of-orderness. 
In the presence of equal-cost multi-path routing, the bottlenecks may often be in a subpath that is shared, such as the access point~\cite{meng2022achieving}.
Even if the bottleneck only occurs on one of the paths, as long as we sample enough packets, there would still be correlation, albeit weaker.

In this paper, we present data structures that detect and report packet-reordering statistics to the control plane. 
\begin{itemize}
\item
We first sample as many flows as possible, regardless of their sizes, but only for a short period at a time.
Capitalizing on the correlation, we can capture the extent of reordering in prefixes by observing only snippets of their flows.
This flow-sampling approach performs especially well when  given a small amount of memory.
\item
When more memory is available, we can further improve the accuracy by monitoring heavy flows over longer periods of time in a separate data structure, and only sampling the rest of the flows. 
\end{itemize}

\noindent
The interplay between measuring at the flow level and acting at the prefix level lies at the heart of this problem.
To decide which set of flows to monitor, we need to incorporate prefix identity in managing the data structures, which gives rise to the idea of allocating memory at the prefix level.

In what follows, \S~\ref{sec:prob} formulates the reordering problem and shows the hardness of identifying out-of-order heavy flows. 
\S~\ref{sec:traffic} introduces real-world traffic workload characteristics that motivate our algorithm design.
We also verify the correlation among flows from the same prefix through measurement results.
We elaborate on the flow-sampling approach for finding heavily reordered prefixes in \S~\ref{sec:algo}, and discuss its optimizations for further accuracy gains.
In \S~\ref{sec:eval}, we demonstrate that our algorithms are extremely memory-efficient and hardware-friendly.
We discuss related work in \S\ref{sec:related} and then conclude 
 our paper in \S~\ref{sec:conclusion}.

\section{Problem Formulation: Identify Heavy Out-of-Order IP Prefixes} \label{sec:prob}
\begin{figure}[t]
\centering
\includegraphics[width=0.47\textwidth]{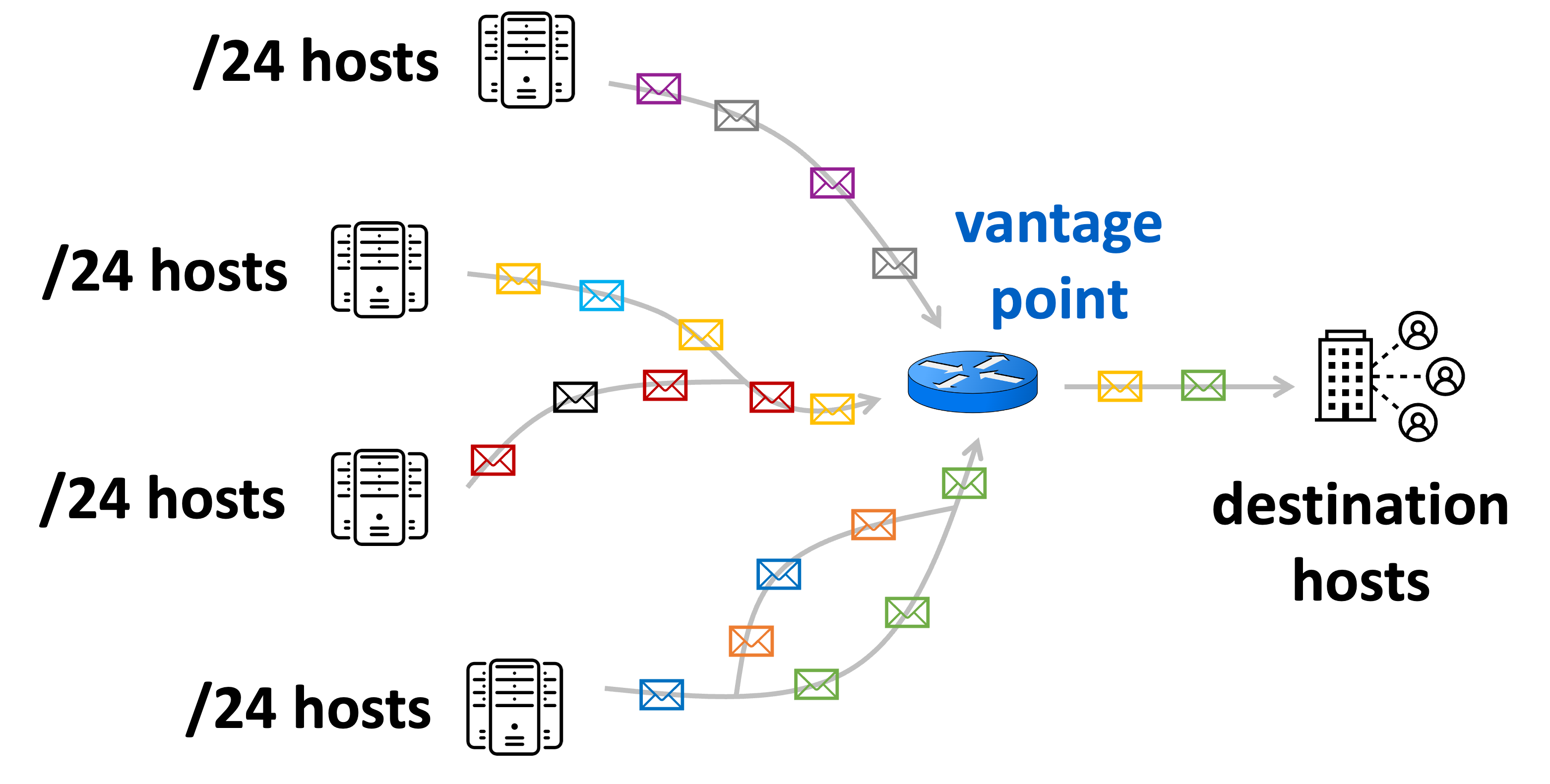}
\vspace{-0.8em}
\caption{
Different source prefixes send packets over different paths.
Packets on a path are colored differently to show that traffic from a single prefix has a mix of packets from different flows.
While flows from a single prefix may split over parallel subpaths, they do share many portions of their network resources.}
\label{fig:srcIPs}
\vspace{-1em}
\end{figure}

Consider a switch close to the receiving hosts, where we observe a stream of incoming packets (Figure~\ref{fig:srcIPs}).
Our goal is to identify the senders whose paths to the receivers are experiencing performance problems, through counting out-of-order packets.
In \S~\ref{subsec:flow}, we first introduce notations and definitions at the flow level, and show that identifying flows with heavy reordering is hard, even with randomness and approximation.
Later, in \S~\ref{subsec:prefix}, we extend the definitions to the prefix level, then discuss possible directions to identify heavy out-of-order prefixes.

\subsection{Flow-level reordering statistics}
\label{subsec:flow}

\subsubsection{Definitions at the flow level} \label{sec:prob:flow:def}

Consider a stream $S$ of TCP packets from different remote senders to the local receivers. 
In practice, TCP packets may contain payloads, and sequence numbers advance by the length of payload in bytes.
But, to keep the discussions simple, we assume sequence numbers advance by $1$ at a time, and we ignore sequence number rollovers.
We note that these assumptions can be easily adjusted to reflect the more realistic scenarios.
Then, a packet can be abstracted as a $3$-tuple $(f, s, t)$, with $f \in \mathcal{F}$ being its flow ID, $s \in [I]$ the sequence number and $t$ the timestamp.
In this case, a flow ID is a $4$-tuple of source and destination IP addresses, and the source and destination TCP port numbers.

Let $S_f = \{(f, s_i, t_i)\}_{i=1}^{N_f} \subseteq S$ be the set of packets corresponding to some flow $f$, sorted by time $t_i$ in ascending order.
We say the packets of flow $f$ are perfectly \emph{in-order} if $s_{i+1} = s_i + 1$ for all $i$ in $[N_f - 1]$.
By commonly used definitions, the $i$th packet in flow $f$ is \emph{out-of-order} if it has:
\begin{enumerate}[label={Def. \arabic*}, leftmargin = 3em] %
    \item \label{def:ooo_dec}
    a lower sequence number than its predecessor in $f$, $s_i < s_{i-1}$.
    \item \label{def:ooo_inc}
    a sequence number larger than that expected from its predecessor in $f$, $s_i > s_{i-1} + 1$.
    \item \label{def:ooo_max}
    a smaller sequence number than the maximum sequence number seen in $f$ so far, $s_i < \max_{j \in [i-1]}{s_j}$.
\end{enumerate}
When $s_i < s_{i-1}$ in flow $f$, we sometimes say an \emph{out-of-order event} occurs at packet $i$ with respect to ~\ref{def:ooo_dec}.
Out-of-order events with respect to other definitions are similarly defined.
Under each definition, denote the number of out-of-order packets in flow $f$ as $O_f$,
a flow $f$ is said to be \emph{out-of-order heavy} if $O_{f} > \varepsilon N_f$ for some small $\varepsilon > 0$.

In practice, none of these three definitions is a clear winner. 
Rather, different applications may call for different metrics.
From an algorithmic point of view, \ref{def:ooo_dec} and \ref{def:ooo_inc} are essentially identical, in that detecting the out-of-order events only requires comparing adjacent pairs of packets.
An out-of-order event with respect to \ref{def:ooo_max}, however, is far more difficult to uncover, as looking at pairs of packets is no longer enough---the algorithm always has to record the maximum sequence number (over a potentially large number of packets) in order to report such events.
In this paper, we focus on \ref{def:ooo_dec} and show that easy modifications to the algorithms can be effective for \ref{def:ooo_inc}.

\subsubsection{A strawman solution for identifying out-of-order heavy flows} \label{sec:prob:flow:strawman}

A naive algorithm that identifies out-or-order heavy flows would memorize, for every flow, the flow ID $f$, the sequence number $s$ of the latest arriving packet from $f$ when using \ref{def:ooo_dec}, and the number of out-of-order packets $o$.
When a new packet of $f$ arrives, we go to its flow record, and compare its sequence number $s'$ with $s$.
If $s' < s$, the new packet is out-of-order and we increment $o$ by $1$. 

For \ref{def:ooo_inc}, we simply save the expected sequence number $s+1$ of the next packet when maintaining the flow record, and compare it to that of the new packet, according to \ref{def:ooo_inc}.
We see that different definitions only slightly altered the sequence numbers saved in memory, and we always decide whether an out-of-order event has happened based on the comparison.

\subsubsection{Memory lower bound for identifying out-of-order heavy flows} \label{sec:prob:flow:lb}


To show that identifying out-of-order heavy flows is fundamentally expensive, we want to construct a worst-case packet stream, for which detecting heavy reordering requires a lot of memory.
For simplicity, we consider the case where heavy reordering occurs in only one of the $\sizeof{\mathcal{F}}$ flows, and let this flow be $f$.
If $f$ is also heavy in size, it suffices to use a heavy-hitter data structure to identify $f$.
Problems arise when $f$ is not that heavy on any timescale, and yet is not small  enough to be completely irrelevant.
A low-rate, long-lived flow fits such a profile.
Unless given a lot of memory, a heavy-hitter data structure is incapable of identifying $f$.
Moreover, since the packet inter-arrival times for a low-rate flow are large, to see more than one packet from $f$, the record of $f$ would need to remain in memory for a longer duration, relative to other short-lived or high-rate flows.

Next we formalize this intuition, and show that given some flow $f$, it is infeasible for
a streaming algorithm to always distinguish whether $O_f$ is large or not, with memory sublinear in the total number of flows $\sizeof{\mathcal{F}}$, even with randomness and approximation. 
\begin{claim} \label{clm:lb}
Divide a stream with at most $\sizeof{\mathcal{F}}$ flows into $k$ time-blocks $B_1, B_2, \dots, B_k$. It is guaranteed that one of the following two cases holds:
\begin{enumerate}
\item
For any pair of blocks $B_i$ and $B_j$ with $i \neq j$, there does not exist a flow that appears in both $B_i$ and $B_j$.
\item
There exists a unique flow $f$ that appears in $\Theta(k)$ blocks.
\end{enumerate}
Then distinguishing between the the two cases is hard for low-memory algorithms.
Specifically, a streaming algorithm needs $\Omega(\min{(\sizeof{\mathcal{F}}, \frac{\sizeof{\mathcal{F}}}{k}\log{\frac{1}{\delta}}}))$ bits of space to identify $f$ with probability at least $1 - \delta$, if $f$ exists.
\end{claim}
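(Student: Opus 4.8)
The plan is to prove the bound by a reduction from a communication/encoding problem, exploiting the fact that identifying $f$ forces the algorithm to remember the flow set of an entire time-block well enough to recognize the flow that later recurs. Concretely, I would fix $m = \Theta(\sizeof{\mathcal{F}}/k)$ and design a hard input distribution in which each block carries about $m$ distinct flows. In case~1 every flow is \emph{private} to a single block, whereas in case~2 a uniformly chosen flow $f$ is planted so that it recurs across $\Theta(k)$ blocks while every other flow stays private. Since in case~2 the only cross-block coincidence is $f$ itself, the algorithm can output $f$ only if, at the boundary where the recurrence first becomes detectable, its memory still retains enough information about the earlier block's flow set to single out the recurring flow against the fresh ``noise'' flows. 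The heart of the argument is to show that this memory must therefore behave as an \emph{approximate set-membership sketch} of a size-$m$ set: it must accept the planted member $f$ and reject a fresh non-member, with error controlled by $\delta$.

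The steps I would carry out, in order, are: (i) specify the distribution above and verify it meets the promise (exactly one of the two cases, with $f$ in $\Theta(k)$ blocks in case~2); (ii) argue that a correct algorithm yields, from its state at the critical boundary, a membership tester for that block's flow set with no false negatives and false-positive probability $O(\delta)$ on a random non-member; (iii) invoke the classical space lower bound for approximate membership (the Bloom-filter bound), which gives $\Omega(m\log\frac{1}{\delta}) = \Omega(\frac{\sizeof{\mathcal{F}}}{k}\log\frac1\delta)$ bits; and (iv) observe the cap at $\sizeof{\mathcal{F}}$, since an exact bit-vector over the whole flow universe uses $O(\sizeof{\mathcal{F}})$ bits and resolves membership with zero error, so no lower bound can exceed $\sizeof{\mathcal{F}}$. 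This also explains the $\min$: the term $\frac{\sizeof{\mathcal{F}}}{k}\log\frac1\delta$ is operative precisely while it stays below $\sizeof{\mathcal{F}}$, i.e.\ while $\log\frac1\delta = O(k)$.

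The main obstacle I anticipate is \emph{pinning} the hardness to a single memory bottleneck despite $f$ appearing in $\Theta(k)$ blocks. If the noise flows are drawn fresh in every block, then $f$ is the unique flow that repeats \emph{within the suffix} of blocks, so an algorithm could in principle discard the first block entirely and recover $f$ from later boundaries, relocating rather than avoiding the cost. Making the reduction rigorous therefore requires either a round-elimination / multi-party communication argument that charges the algorithm at \emph{every} boundary, or a more carefully correlated noise distribution under which no single boundary alone suffices, forcing the algorithm to maintain a near-membership sketch throughout. Getting the constant in the exponent right---landing on exactly $\log\frac1\delta$ rather than $\log\frac{m}{\delta}$---is the delicate quantitative point, and I expect it to hinge on keeping the number of fresh decoys at the critical boundary small (e.g.\ $O(1)$) while still using the full recurrence to certify case~2.
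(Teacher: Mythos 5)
Your plan fails at step (ii), and the failure is not a gap you can patch locally: the premise that a correct algorithm's state at some single ``critical boundary'' must act as a membership tester for that block's flow set, with no false negatives and false-positive rate $O(\delta)$, is false. To see this, consider the algorithm that essentially matches the claimed bound, run on exactly the hard distribution you describe in step (i): for each block $i$, sample each (flow, block-$i$) incidence independently with probability $q = \Theta(\log\frac{1}{\delta}/k)$ (say, by hashing the pair), store the sampled flow IDs with their block indices, and output the unique flow that is ever sampled in two distinct blocks. Under your distribution only $f$ can recur across blocks, and since $f$ appears in $\Theta(k)$ blocks it is sampled in at least two of them except with probability $\left(1+\Theta(\log\tfrac{1}{\delta})\right)e^{-\Theta(\log\frac{1}{\delta})} \leq \delta$; the space used is $\tilde{O}\left(\frac{\sizeof{\mathcal{F}}}{k}\log\frac{1}{\delta}\right)$. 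This algorithm succeeds with probability $1-\delta$, yet at every block boundary its state, viewed as a membership tester for the block just finished, has false-\emph{negative} rate $1-q$, which is close to $1$ precisely in the regime $k \gg \log\frac{1}{\delta}$ where the $\frac{\sizeof{\mathcal{F}}}{k}\log\frac{1}{\delta}$ term of the bound is operative. So correctness does not yield the tester your step (ii) needs at any boundary, and the Bloom-filter bound of step (iii) has nothing to attach to. The obstacle you flagged in your final paragraph is therefore not a loose end but a refutation of the single-boundary plan: case 2 itself guarantees $\Theta(k)$ recurrences of $f$, so the algorithm always has $\Theta(k)$ independent chances, and the cost provably cannot be pinned to one cut --- it must be charged across all $k$ boundaries simultaneously.

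The argument that does this is the multi-party communication route you name but do not carry out, and it is how the paper proves the claim: reduce from the $k$-player MostlyDisjoint promise problem of \cite{kamath2021simple}, interpreting player $i$'s set elements as the flow IDs of the packets in block $B_i$. A space-$s$ streaming algorithm yields a one-way protocol in which each player runs the algorithm on its own block and forwards the $s$-bit state, so the communication lower bound for MostlyDisjoint --- which holds against the entire $k$-message transcript rather than any single message --- transfers to $s = \Omega\left(\min\left(\sizeof{\mathcal{F}}, \frac{\sizeof{\mathcal{F}}}{k}\log\frac{1}{\delta}\right)\right)$. If you revise, your step (i) (the hard distribution) and step (iv) (the $\min$ simply reflects the trivial $O(\sizeof{\mathcal{F}})$-bit exact algorithm) can stay, but steps (ii)--(iii) must be replaced wholesale by this reduction; no per-boundary membership or set-sketching argument can work.
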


Claim~\ref{clm:lb} follows from reducing the communication problem MostlyDisjoint 
stated in \cite{kamath2021simple}, by treating elements of the sets as flow IDs in a packet stream. 

Claim~\ref{clm:lb} implies the hardness of identifying out-of-order heavy flows, as the unique flow $f$ may have many packets, but not be heavy enough for a heavy-hitter algorithm
to detect it efficiently.
Deciding whether such a flow exists is already difficult, identifying it among other flows is at least as difficult.
Consequently, checking whether it has many out-of-order packets is difficult as well.


The same reduction also implies that detecting duplicated packets
requires $\Omega(\sizeof{\mathcal{F}})$ space.
In fact, Claim~\ref{clm:lb} corroborates the common perception that measuring performance metrics such as round-trip delays, reordering, and retransmission in the data plane is generally challenging, as it is hard to match tuples of packets that span a long period of time, with limited memory.

\subsection{Prefix-level reordering statistics}
\label{subsec:prefix}

\subsubsection{Problem statement} \label{sec:prob:prefix:statement}

Identifying out-of-order heavy flows is hard; fortunately, we do not always need to report individual flows.
Since reordering is typically a property of a network path, and routing decisions are made at the prefix level, it is natural to focus on heavily reordered prefixes.
Throughout this paper, we consider $24$-bit source IP prefixes, as they achieve a reasonable level of granularity.
The same methods apply if prefixes of a different length are more suitable in other applications.

By common definitions of the flow ID, the prefix $g$ of a packet $(f, s, t)$ is encoded in $f$.
To simplify notations, we think of a prefix $g$ as the set of flows with that prefix, and when context is clear, $S$ also refers to the set of all prefixes in the stream.
Let $O_g = \sum_{f \in g} O_f$ be the number of out-of-order packets in prefix $g$.
A prefix $g$ is \emph{out-of-order heavy} if $O_g > \varepsilon N_g$ for some small $\varepsilon > 0$, where $N_g$ is the number of packets in prefix $g$.

For localizing attacks and performance problems, it is not always sensible to catch prefixes with the highest fraction of out-of-order packets.
When a prefix is small, even a single out-of-order packet would lead to a large fraction, but it might just be caused by a transient loss.
In addition, with the control plane being more computationally powerful yet less efficient in packet processing, there is an apparent trade-off between processing speed and the amount of communication from the data plane to the control plane.
As a result, we also want to limit the communication overhead incurred. 

Therefore, for some $\varepsilon, \alpha, \beta$, our goals can be described as: 
\begin{enumerate}
\item Report prefixes $g$ with $N_g \geq \beta$ and $O_{g} > \epsilon N_g$.
\item Avoid reports of prefixes with at most $\alpha$ packets.
\item Keep the communication overhead from the data plane to the control plane small.
\end{enumerate}

\subsubsection{Bypassing memory lower bound}

As a consequence of Claim~\ref{clm:lb}, it is evidently infeasible to study all flows from a prefix and aggregate all of that information to determine whether to report the prefix.
So why would reporting at the prefix level circumvent the lower bound?
In practice, packets are often reordered due to a congested or flaky link that causes lost, reordered, or retransmitted packets at the TCP level. 
Therefore, flows traversing the same path at the same time are positively correlated in their out-of-orderness.
This effectively means that we only need to study a few flows from a prefix to estimate the extent of reordering this prefix suffers.
We state the correlation assumption that all of our algorithms are based on as follows, and postpone its verification to~\S\ref{sec:traffic:correlation}:
\begin{assumption}
Let $f$ be a flow chosen uniformly at random from all flow in prefix $g$.
If $N_g > \alpha$, and $g$ has at least two flows, $\frac{O_g-O_f}{N_g-N_f}$ and $\frac{O_f}{N_f}$ are positively correlated.
\end{assumption}

\section{Traffic Characterization}
\label{sec:traffic}

This section presents several traffic traits that drive our algorithm design.
For all of our measurement and evaluation, we make use of the following real-world packet traces: 
\begin{itemize}
    \item \textbf{Campus:} Two anonymized packet traces, collected ethically from a border router on a university campus network on June 5, 2019, and May 9, 2022, respectively.
    \item \textbf{Backbone:} CAIDA Anonymized Internet Traces from 2018~\cite{caida18} and 2019~\cite{caida19}.
\end{itemize}

Note that only packets with payloads are relevant for our application, as TCP sequence numbers must advance for our algorithms to detect reordering events.
We therefore preprocess the trace to only contain flows from servers to clients using source and destination port numbers,
with the rationale that these senders are more likely to generate continuous streams of traffic.



\subsection{Heavy-tailed size and out-of-orderness} \label{sec:traffic:dist}


Consistent with numerous prior measurement studies, in our $5$-minute campus trace (Figure~\ref{fig:pref_flow_cdf}), most flows are small, and only a few flows are large.
However, a small fraction of flows and prefixes tend to account for a large fraction of the traffic.
For example, in this trace, more than $90\%$ of the packets belong to  the $5\%$ largest flows or prefixes.
Out-of-orderness in prefixes is similarly heavy-tailed; only a small fraction of prefixes have a significant fraction of packets out-of-order (Figure~\ref{fig:ooo_cdf}), 
e.g., only less than $12\%$ of prefixes with 
at least $2^7$ packets have more than $1\%$ of packets out of order by \ref{def:ooo_dec}.
Out-of-order events defined by \ref{def:ooo_inc} are more prevalent.
But, even so, packet reordering remains a low-probability event, with less than $10\%$ of the prefixes of size at least $2^7$ experiencing more than $7\%$ out-of-order packets by \ref{def:ooo_inc}.


If most packet reordering occurred in heavy flows and prefixes, detecting heavy reordering would be easy, by solely focusing on large flows and prefixes using heavy-hitter data structures.
However, what happens in reality is quite the opposite.
To see that, we use an unconventional split violin plot (Figure~\ref{fig:violin}) to show three sets of information: the prefix size (color of the violin), the flow size distribution in a prefix (the left half of the violin), and the fraction of reordered packets for that prefix that lie within flows of certain size (the right half of the violin).
Each split violin corresponds to a heavily reordered prefix with at least $\beta = 2^7$ packets, using \ref{def:ooo_inc} with $\varepsilon = 0.02$.
By comparing the left halves of all violins, we see a wide variation of flow sizes in prefixes with heavy reordering, and the sizes of such prefixes can be orders-of-magnitude different.
We see that many prefixes do not have any large flows.
Moreover, the largest flows in each heavily reordered prefix do not necessarily contain most of the out-of-order packets in that prefix.
The $131$-largest prefix gives one such example. 
Though the $50$ largest flows in this prefix have size $2^7$ or larger, almost $95\%$ of the total out-of-order packets in this prefix comes from flows with size smaller than $2^7$.
Such a prefix would be very difficult for a heavy-hitter data structure to catch without investing significant memory.
Thus, by zooming in on large flows and prefixes, we would inevitably miss out on many prefixes of interest without any large flow.

\begin{figure*}[t]
\centering
\begin{subfigure}[t]{0.32\linewidth}
\centering
\includegraphics[width=\textwidth]{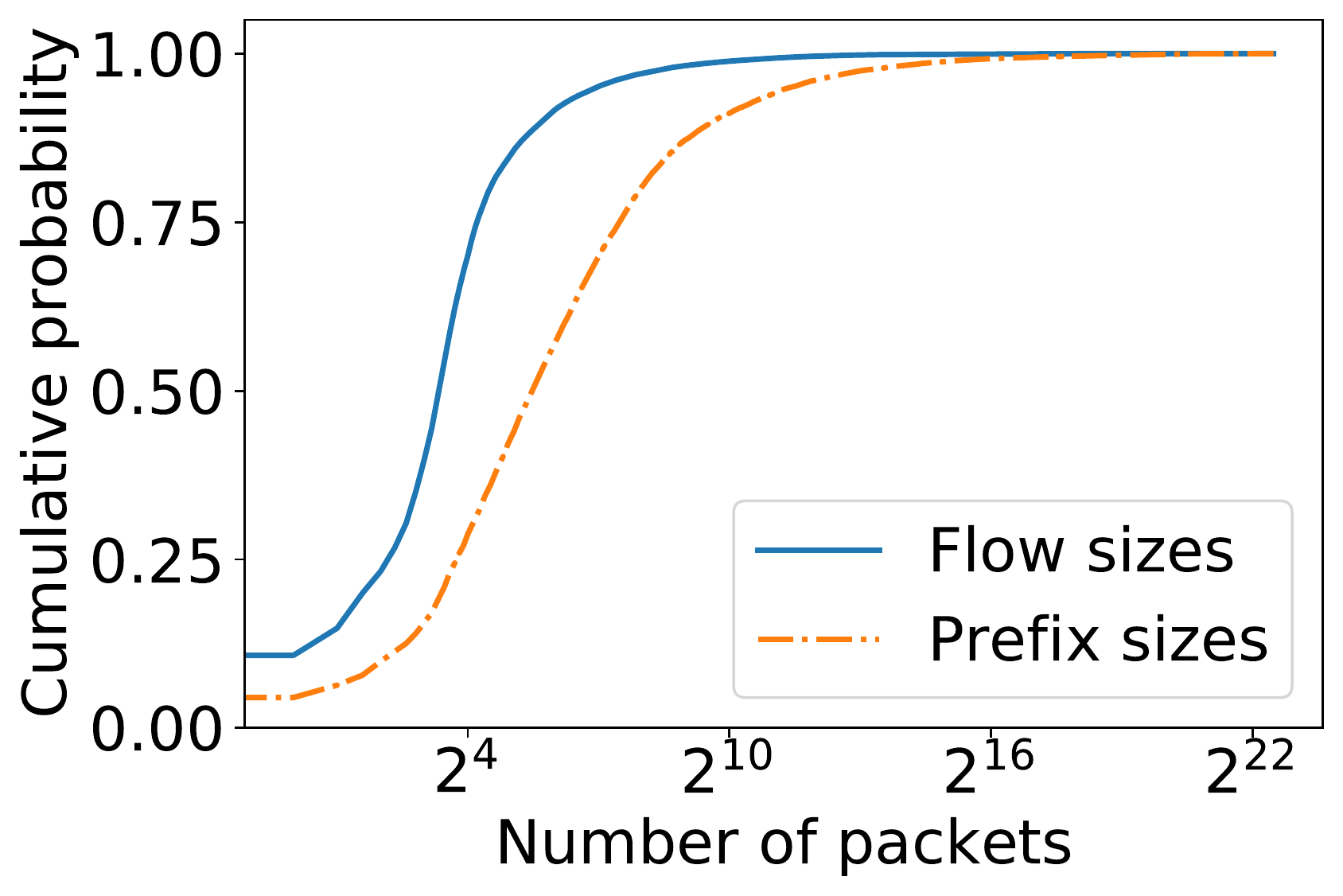}
\caption{
A small fraction of flows and prefixes account for a large fraction of the traffic.}
\label{fig:pref_flow_cdf}
\end{subfigure} \hfill
\begin{subfigure}[t]{0.32\linewidth}
\centering
\includegraphics[width=\textwidth]{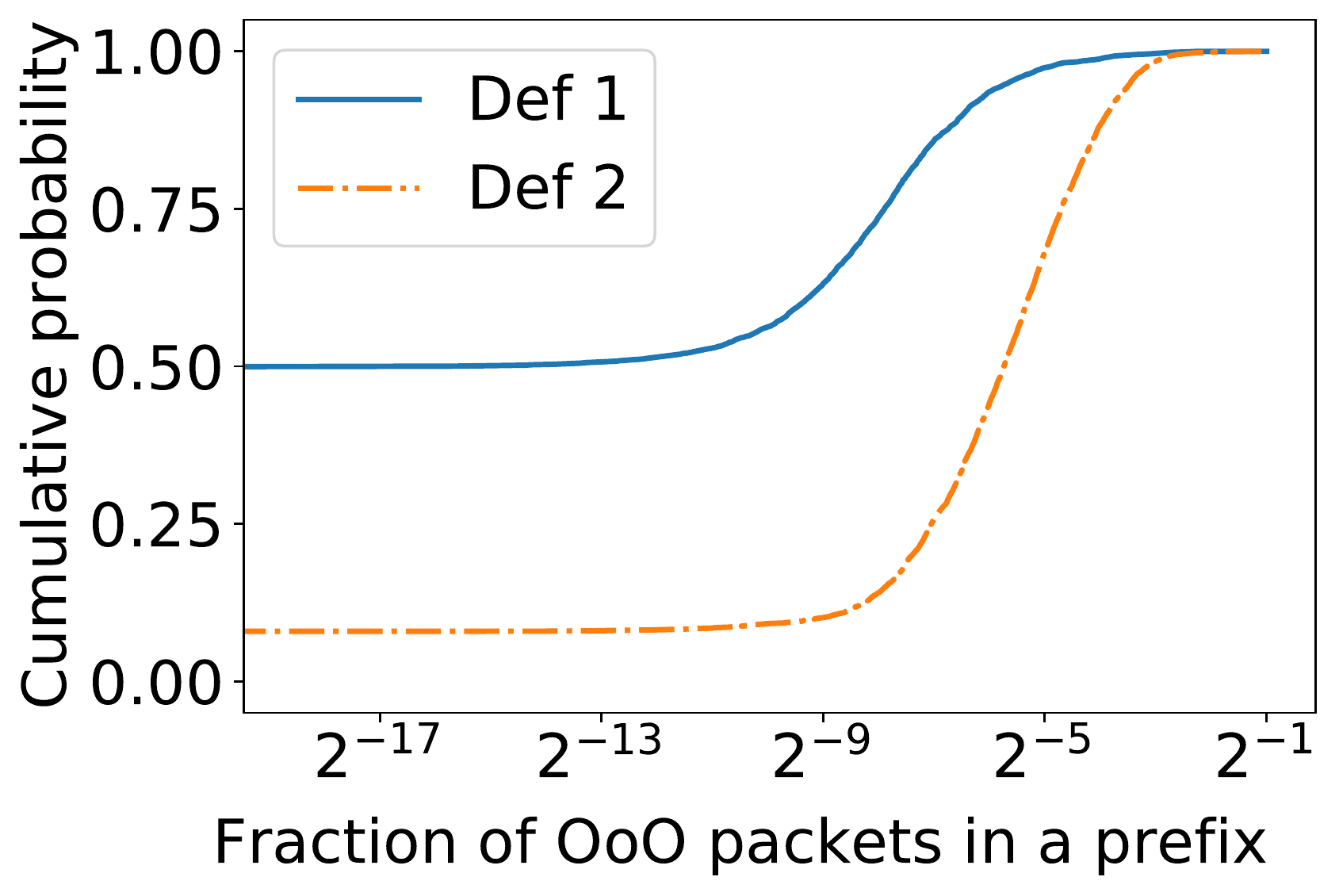}
\caption{Out-of-order heavy prefixes are rare.
Here prefixes have at least $\beta = 2^7$ packets.}
\label{fig:ooo_cdf}
\end{subfigure} \hfill
\begin{subfigure}[t]{0.32\linewidth}
\centering
\includegraphics[width=\textwidth]{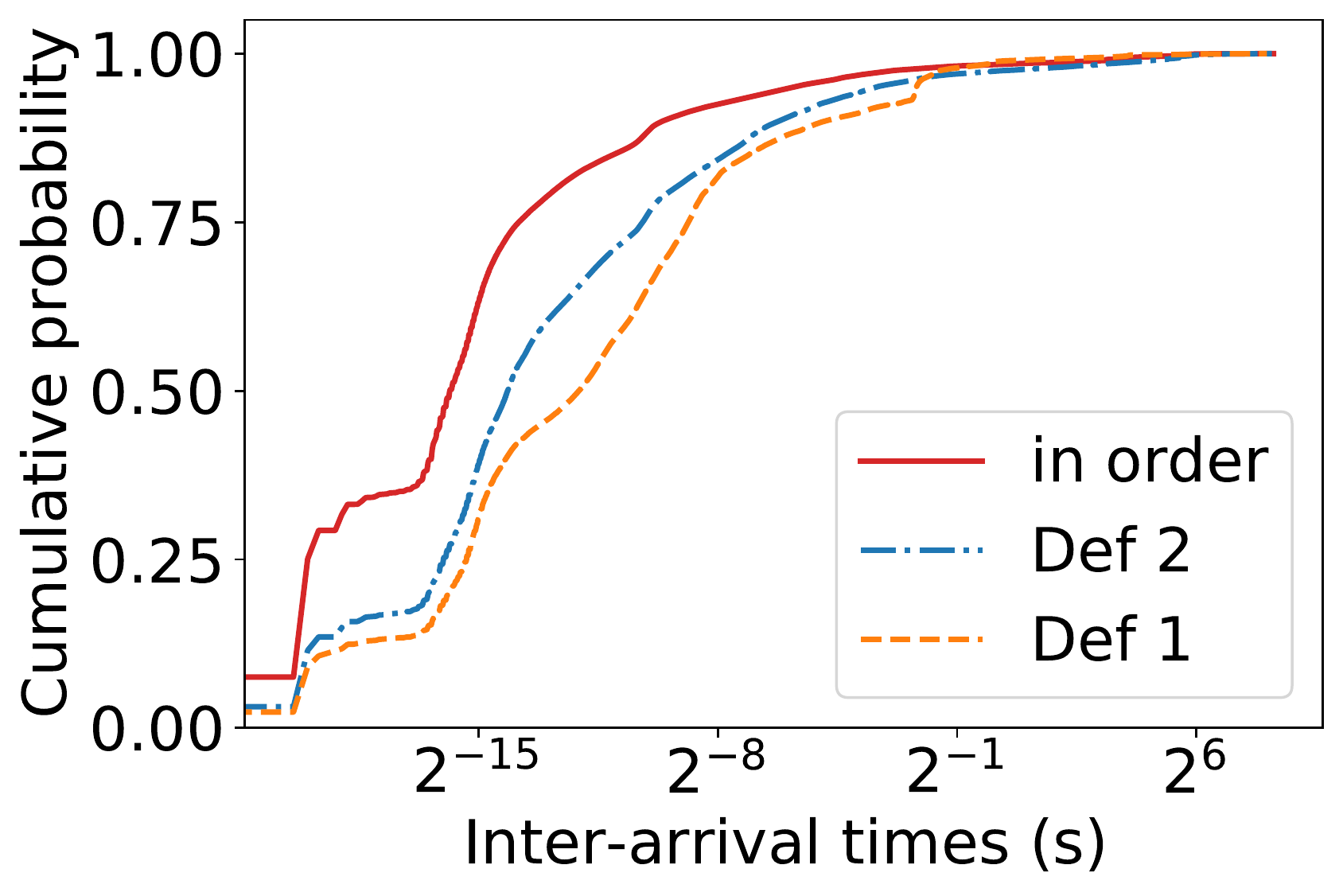}
\caption{
Out-of-order events defined by \ref{def:ooo_dec} exhibit the highest inter-arrival times.
}
\label{fig:inter}
\end{subfigure}
\vspace{-0.9em}
\caption{Heavy-tailed distributions in a $5$-minute campus trace.}
\vspace{-1em}
\end{figure*}


\begin{figure*}[t]
\centering
\includegraphics[width=1\textwidth]{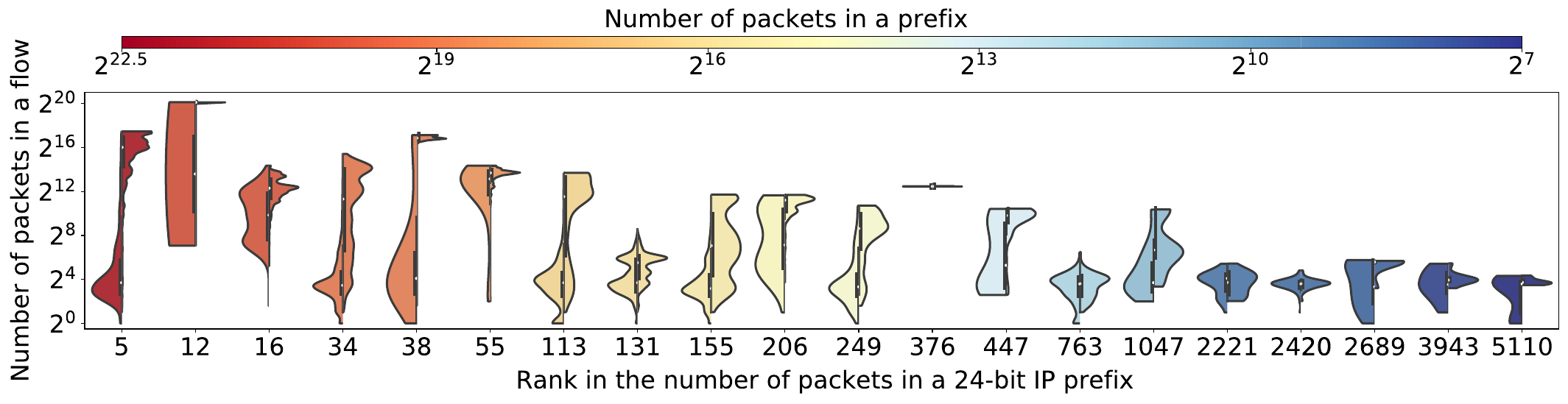}
\vspace{-2em}
\caption{
A split violin plot showing prefix sizes, distributions of flow sizes in each prefix, and what fraction of reordering in a prefix comes for which flow size. 
A split violin of rank $r$ refers to the $r$-th largest prefix in the trace.
}
\label{fig:violin}
\vspace{-1em}
\end{figure*}

Fortunately, to report a prefix with a significant amount of reordering, we need not measure every flow in that prefix, as flows in the same prefix have some correlation in their out-of-orderness.
As it turns out, the fraction of out-of-order packets in a prefix is positively correlated with that of a flow within the prefix, which we verify next.

\subsection{Correlation among flows in a prefix} \label{sec:traffic:correlation}

Let $f$ be a flow drawn uniformly at random from a set of flows.
Let $X$ be the random variable representing the fraction of out-of-order packets in flow $f$, $X = \frac{O_f}{N_f}$.
Denote $g$ as the prefix of flow $f$, let $Y$ be the random variable denoting the fraction of out-of-order packets among all flows in prefix $g$ excluding $f$, that is, $Y = \frac{O_g - O_f}{N_g - N_f}$, where $N_g$ is the number of packets in prefix $g$.
To ensure that $N_g > N_f$, the prefixes we sample from must have at least two flows.
We use the \emph{Pearson correlation coefficient (PCC)} to show that $X$ and $Y$ are positively correlated, which implies that the out-of-orderness of a flow $f$ is statistically representative of other flows in the prefix of $f$.
Essentially a normalized version of $\Cov(X,Y)$, PCC always lies in the interval $[-1, 1]$, and a positive PCC indicates a positive linear correlation.
Lacking a better reason to believe the correlation between $X$ and $Y$ is of higher order, we shall see that PCC suffices for our analysis.


Given a traffic trace, let $S$ be the set of flows whose prefixes 
have at least two flows.
We compute the PCC as follows:
\begin{enumerate}
    \item 
    Draw $n$ flows from $S$, independently and uniformly at random.
    \item 
    For each of the $n$ flows $f_i$, let $x_i = \frac{O_{f_i}}{N_{f_i}}$, $y_i = \frac{\sum_{f' \in g, f' \neq f_i} O_{f'}}{\sum_{f' \in g, f' \neq f_i} N_{f'}}$,
    \item
    The PCC $r = \frac{\sum_{i=1}^n (x_i - \bar{x})(y_i - \bar{y})}{\sqrt{\sum_{i=1}^n (x_i - \bar{x})^2} \sqrt{\sum_{i=1}^n (y_i - \bar{y})^2}}$, where $\bar{x} = \frac{1}{n} \sum_{i=1}^n x_i, \bar{y} = \frac{1}{n} \sum_{i=1}^n y_i$.
\end{enumerate}

\begin{figure}[t]
\centering
\includegraphics[width=0.47\textwidth]{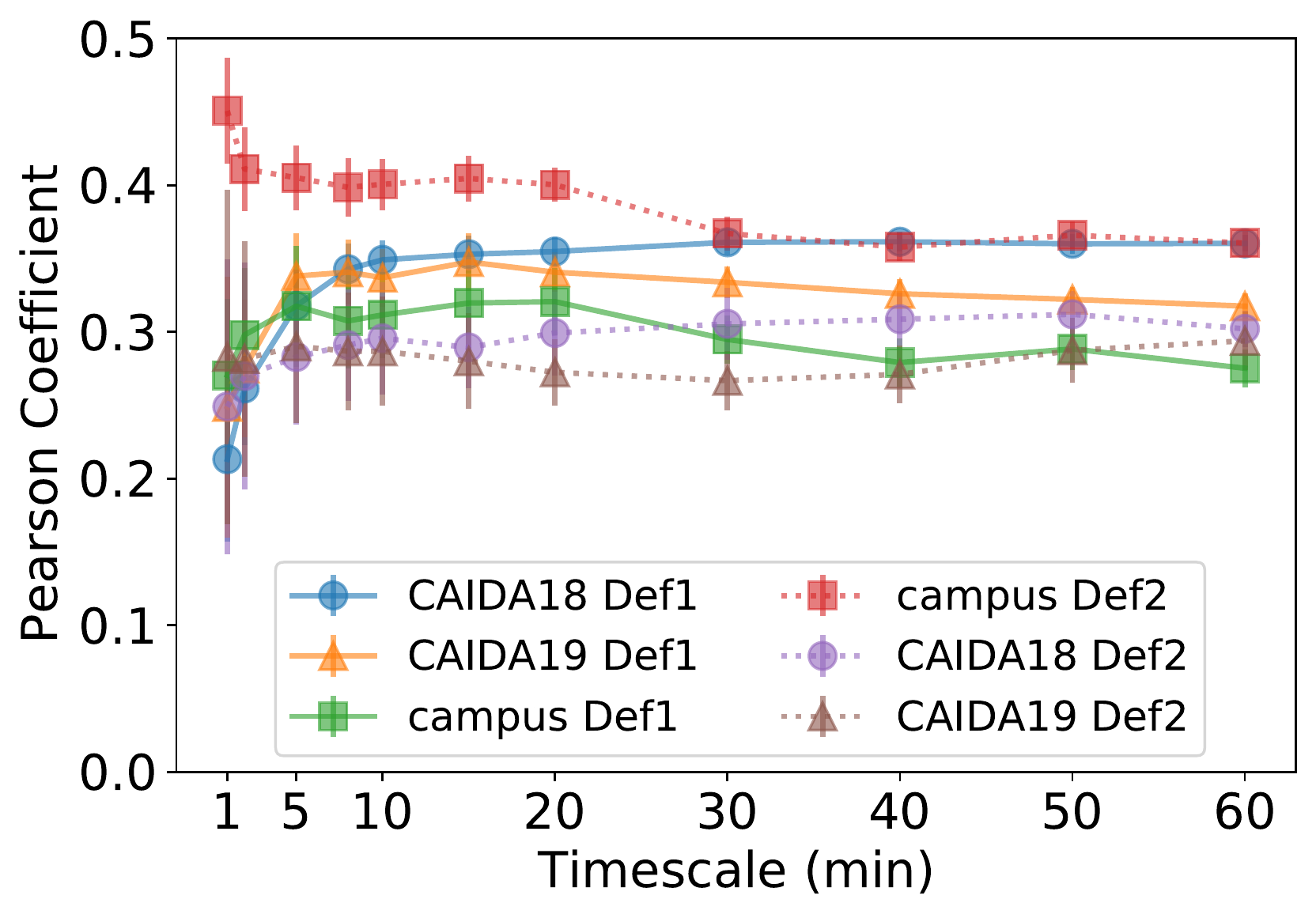}
\vspace{-0.8em}
\caption{Pearson coefficient on varying timescales shows that a positive correlation exists between the reordering of a flow and that of its prefix.
}
\label{fig:pcc}
\vspace{-1em}
\end{figure}

We perform $m=100$ tests on each traffic trace using both definitions of reordering, on timescales ranging from $1$ minute to $60$ minutes (Figure~\ref{fig:pcc}). 
Each point shows the average of $m=100$ tests, where we draw $n = 0.5\% \cdot \sizeof{S}$ flows in each test.
The result indicates that, a positive correlation exists between $X$ and $Y$ for all tested traces on all timescales, and the correlation tends to stabilize after a small time period such as five minutes.

\subsection{Packet inter-arrival times within a flow} \label{sec:traffic:inter}

We also study the inter-arrival time of packets within a flow to understand how efficient the flow sampling algorithm can be.
Due to TCP windowing dynamics, where the sender transmits a window of data and then waits for acknowledgments, in-order packets tend to have small inter-arrival times.
Depending on the definition, reordering can be a result of gaps in transmission of non-consecutive packets (\ref{def:ooo_inc}), or worse yet the retransmissions of lost packets (\ref{def:ooo_dec}), which often lead to larger inter-arrival times.

Indeed, Figure~\ref{fig:inter} shows that the inter-arrival times of out-of-order packets using \ref{def:ooo_inc} 
tend to be smaller than that of the out-of-order packets using \ref{def:ooo_dec}, with the inter-arrival times of in-order packets being the smallest.
This implies that, to detect the reordering events in \ref{def:ooo_dec}, the algorithm has to store  records for a longer waiting period, which potentially exhausts more memory resources.

\section{Data-Plane Data Structures for Out-of-Order Monitoring} \label{sec:algo}

At a high level, a data-plane algorithm generates reports of flows with potentially heavy packet reordering on the fly, and a simple control-plane program  parses through the reports to extract their prefixes.
Each report includes the prefix, the number of packets monitored, and the number of out-of-order packets of a suspicious flow.
At the end of the time interval, we can also scan the data-plane data structure to generate reports for highly-reordered flows remaining in memory.
On seeing reports, a control-plane program simply aggregates counts from reports of the same prefix, and outputs a prefix when its count exceeds a threshold.

In the data plane, we keep state at the flow level, and consider prefix information in allocating memory.
Assuming a positive correlation between the out-of-orderness of a prefix and that of the flows from that prefix, we do not have to monitor all flows in their entirety to gain enough information about a prefix.
This leads to the simple yet effective flow-sampling algorithm in \S~\ref{sec:algo:array}, where we sample as many flows as possible, but only over a short period at a time.
Though it is not enough to only measure reordering in heavy flows (\S~\ref{sec:traffic:dist}), in \S~\ref{sec:algo:hybrid}, we show that there are still benefits from combining a heavy-hitter data structure with the flow-sampling array.

\subsection{Sample flows over short periods}
\label{sec:algo:array}

To sieve through a large number of flows with limited memory, the turnover rate has to be high.
This means that, the algorithm has to be somewhat oblivious to the various statistics of a flow, such as flow sizes and inter-arrival times, when choosing to admit or evict a flow.
To set the stage for later discussions, throughout this paper, we refer to the unit of memory allocated to keep one flow record as a \emph{bucket}.
Now, rather than one bucket per flow, the main idea is to use one bucket to quickly check over multiple flows in turn.

\subsubsection{Flow sampling with array} \label{sec:algo:array:original}

Under the strict memory access constraints, we again opt for a hash-indexed array as a natural choice of data structure, where each row in the array corresponds to a bucket, and all buckets behave independently.
To check many prefixes for reordering, we do not want some prefix with a huge number of flows to dominate the data structure.
To this end, we assign flows from the same prefix to the same bucket, by hashing prefixes instead of flow IDs, a technique we use in all our algorithms.

Therefore, we fix a bucket $\mathfrak{b}$, and consider the substream of packets hashed to $\mathfrak{b}$.
When a packet $(f, s, t)$ arrives at $\mathfrak{b}$, there are three cases:
\begin{enumerate}
    \item
    If $\mathfrak{b}$ is empty, we always admit the packet, that is, we save its flow record $f$, sequence number $s$, timestamp $t$ in $\mathfrak{b}$, together with the number of packets $n$ and the number of out-of-oder packets $o$, both initilized to $0$.
    \item
    If flow $f$'s record is already in $\mathfrak{b}$, we update the record as in the strawman solution (\S~\ref{sec:prob:flow:strawman}), and update the timestamp in memory to $t$.
    \item
    If $\mathfrak{b}$ is occupied by another flow's record $(f', s', t', n', o')$, we only admit $f$ if $f'$ has been monitored in memory for a sufficient period specified by parameters $T$ and $C$, or the prefix of $f'$ could be potentially heavily reordered with respect to another parameter $R$.
    That is, $f$ overwrites $f'$ with record $(f, s, t, n=0, s=0)$ only if one of the following holds:
    \begin{enumerate}
        \item $f'$ is stale: $t - t' > T$.
        \item $f'$ has been hogging $\mathfrak{b}$ for too long: $n' > C$.
        \item \label{algo:report}
        $f'$ might belong to a prefix with heavy reordering: $o' > R$. 
    \end{enumerate}
\end{enumerate}

In Case~\ref{algo:report}, the algorithm sends a $3$-tuple report $(g', n', o')$ to the control plane, where $g'$ is the prefix of flow $f'$.
On seeing reports from the data plane, a simple control-plane program keeps a tally for each reported prefix $g$.
Let $\{(g, n_i, o_i)\}_{i=1}^r$ be the set of all reports corresponding to a prefix $g$.
The control-plane program outputs $g$ if $\sum_{i=1}^r n_i \geq \alpha$, for the same $\alpha$ in \S~\ref{sec:prob:prefix:statement}.
In the following sections, we refer to the data-plane component together with the simple control-plane program as the flow-sampling algorithm. 

\paragraph{Lazy expiration of flow records in memory}

Due to memory access constraints, many data-plane algorithms \emph{lazily} expire records in memory on collisions with other flows, as opposed to actively searching for stale records in the data structure.
We again adopt the same technique in the algorithm above, though here it is more nuanced.
We could imagine a variant of the algorithm where a flow is monitored for up to $C+1$ packets at a time.
That is, when the $(C+1)$st packet arrives, we check whether to report this flow, and evict its record.
Compared to this variant, lazy expiration helps in preventing a heavy flow being admitted into the data structure consecutively, so that the heavy flow can be evicted before a integer multiple of $(C+1)$ packets, should another flow appear in the meantime.

\paragraph{Robustness of flow sampling}

For the flow-sampling method to be effective, the data structure needs to sample as many flows as possible.
Therefore, it is not desirable to keep a large flow in memory when we have already seen many of its packets, and learned enough information about its packet reordering.
This means the packet count threshold $C$ should not be too large.
Neither do we want to keep a flow, regardless of its size, that has long been finished.
We can eliminate such cases by setting a small inter-arrival timeout $T$.

Now the question is, how small can these parameters be.
Real-world traffic can be bursty, meaning that sometimes there are packets from the same flow arriving back-to-back.
In this case, even if we overwrite the existing flow record on every hash collision ($T=0$ and $C=1$), the algorithm still generates meaningful samples.
When the memory is not too small compared to the number of prefixes, and hash collisions are rare, the algorithm might even have good performance.
However, setting small $T>0$ and $C>1$ makes the algorithm more robust against worst-case streams.
Consider a stream of packets where no adjacent pairs of packets come from the same flow.
On seeing such a stream, a flow-sampling algorithm that overwrites existing records on every hash collision with another flow will no doubt collect negligible samples.
In contrast, small $T>0$ and $C>1$ allow a small period of time for a flow in memory to be monitored, and hence gives a better chance of capturing packet reordering.

\subsubsection{Performance guarantee}
In this section, we analyze the number of times a flow with a certain size is sampled.
Consider a prefix $g$ when the hash function is fixed.
Let $\mathfrak{b}$ be the bucket prefix $g$ is hashed to, and we know all the flows as well as the prefixes that are hashed to $\mathfrak{b}$.
With a slight abuse of notation, we write $g \in \mathfrak{b}$ when the bucket with index $h(g)$ is $\mathfrak{b}$.
We also write $f \in \mathfrak{b}$ when $f$'s prefix is hashed to $\mathfrak{b}$.
To capture the essence of the flow-sampling algorithm
without excessive details, we make the following assumptions:
\begin{enumerate}
\item 
Each packet in $S$ is sampled i.i.d. from distribution $(p_{f})_{f \in \mathcal{F}}$, that is, each packet belongs to some flow $f \in \mathcal{F}$ independently with probability $p_f$.
Consequently, each packet belongs to some prefix $g$ independently with probability $p_g = \sum_{f \in g} p_f$.
\item \label{assump:T}
Let $p_{f \mid \mathfrak{b}} = \frac{p_{f}}{\sum_{f' \in \mathfrak{b}} p_{f'}}$, $p_{g \mid \mathfrak{b}}$ can be similarly defined.
Only a flow $f$ with $p_{f \mid \mathfrak{b}}$ greater than some $p_{\text{min}}$ will get checked, where we think of $p_{\text{min}}$ as a fixed threshold depending on the inter-arrival time threshold $T$ and distribution $(p_{f})_{f \in \mathcal{F}}$.
\item \label{assump:C}
A flow is checked exactly $C+1$ packets at a time.
\end{enumerate}
Note that Assumption (\ref{assump:T}) is a way to approximate the effect of $T$, where we assume a low-frequency flow would soon be overwritten by some other flow on hash collision.
In contrast to Assumption (\ref{assump:C}), the flow sampling algorithm does \emph{not} immediately evict a flow record with $C+1$ packets, if there is no hash collision.
In this way, though $f$ is monitored beyond its original $C+1$ packets, once a hash collision occurs, the collided flow would seize $f$'s bucket.
By imposing Assumption (\ref{assump:C}), the heavier flows would likely benefit by getting more checks, while the smaller flows would likely suffer.
Empirically, the eviction scheme of the flow-sampling algorithm (\S~\ref{sec:algo:array:original}) achieves better performance in comparison to Assumption (\ref{assump:C}). \vspace{-0.2em}
\begin{lemma} \label{lem}
Given the total length of stream $\sizeof{S}$, distributions $(p_{f})_{f \in \mathcal{F}}$, 
with the assumptions above, for a fixed hash function $h$ and any $\varepsilon, \delta \in (0,1)$, a prefix $g$ in bucket $\mathfrak{b}$ is checked at least $(1- \delta) t_1 p_{g \mid \mathfrak{b}}$ times with probability at least $1- e^{-p_{\text{min}} t_1 C F_{\mathfrak{b}} \cdot \frac{\varepsilon^2}{24}} - e^{-\frac{\varepsilon^2 \sizeof{S} \sum_{g \in \mathfrak{b}} p_g}{3}} - e^{-\frac{\delta^2 t_1 p_{g \mid \mathfrak{b}}}{2}}$, where $t_1 = \floor{\frac{\sizeof{S} \sum_{g \in \mathfrak{b}} p_g}{(1+\frac{\varepsilon}{2}) C F_{\mathfrak{b}}}}$ and $p_{g \mid \mathfrak{b}} = \frac{\sum_{f \in g: p_{f \mid \mathfrak{b}} \geq p_{\text{min}}} p_f}{\sum_{f' \in \mathfrak{b}} p_{f'}}$.
\end{lemma}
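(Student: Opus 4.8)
\section*{Proof proposal}

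The plan is to analyze a single bucket $\mathfrak{b}$ in isolation, which is legitimate because, once the hash function $h$ is fixed, all buckets behave independently and only the packets whose prefix hashes to $\mathfrak{b}$ are relevant. By Assumption~1 the subsequence of packets delivered to $\mathfrak{b}$ is itself an i.i.d.\ stream in which each packet belongs to flow $f \in \mathfrak{b}$ with probability $p_{f \mid \mathfrak{b}}$, and its length is $\mathrm{Binomial}(\sizeof{S}, P_{\mathfrak{b}})$ with $P_{\mathfrak{b}} := \sum_{g \in \mathfrak{b}} p_g$. First I would apply a Chernoff bound to show that the number of packets delivered to $\mathfrak{b}$ concentrates around its mean $\sizeof{S} P_{\mathfrak{b}}$; this contributes the failure term $e^{-\varepsilon^2 \sizeof{S} \sum_{g \in \mathfrak{b}} p_g / 3}$, and from here on I condition on $\mathfrak{b}$ receiving essentially its expected number of packets.

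Next I would formalize the dynamics as a sequence of monitoring \emph{episodes}. Under Assumptions~\ref{assump:T} and~\ref{assump:C}, an episode is ``owned'' by a \emph{checkable} flow (one with $p_{f \mid \mathfrak{b}} \geq p_{\text{min}}$, of which there are $F_{\mathfrak{b}}$ in $\mathfrak{b}$) and ends once $C+1$ of that flow's packets have arrived. Because the substream is i.i.d.\ and each new owner is selected by the first checkable packet following an episode boundary, the owners of successive episodes are independent draws, each equal to $f$ with probability $p_f / \sum_{f' \text{ checkable}} p_{f'}$. Consequently, whether a given episode is owned by a flow of prefix $g$ is an independent Bernoulli trial whose success probability is \emph{at least} $p_{g \mid \mathfrak{b}}$: the numerator of $p_{g \mid \mathfrak{b}}$ already restricts to checkable $g$-flows, while its denominator $\sum_{f' \in \mathfrak{b}} p_{f'}$ can only over-count, so $p_{g \mid \mathfrak{b}}$ is a valid lower bound on the true $g$-ownership rate. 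A prefix is ``checked'' exactly once per episode it owns.

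The core of the argument is to show that within the (conditioned) substream at least $t_1$ episodes complete. An episode owned by $f$ spans a negative-binomial number of packets with mean $(C+1)/p_{f \mid \mathfrak{b}}$; averaging this length against the owner distribution cancels the $p_f$ factors and leaves mean episode length $(C+1) P_{\mathfrak{b}} F_{\mathfrak{b}} / \sum_{\text{checkable}} p_{f'}$, which is precisely what makes $t_1 \approx \sizeof{S} P_{\mathfrak{b}} / (C F_{\mathfrak{b}})$ a lower bound on the episode count once the $(1 + \varepsilon/2)$ slack in its definition absorbs both the $C$-versus-$(C+1)$ gap and the concentration error. I would therefore bound the total length of the first $t_1$ episodes from above and show it stays below the available substream length, which amounts to concentrating a sum of heterogeneous negative-binomial variables; here $p_{\text{min}}$ controls the tails of the individual episode lengths, with the smallest checkable flow acting as the bottleneck, yielding the failure term $e^{-p_{\text{min}} t_1 C F_{\mathfrak{b}} \varepsilon^2/24}$. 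I expect this to be the main obstacle: both the number of episodes and their lengths are random, so it is genuinely a renewal/stopping-time concentration rather than a single Chernoff application, and $\varepsilon, C, F_{\mathfrak{b}}, p_{\text{min}}$ must be balanced so that the episode-length overshoot is dominated by the slack reserved in the substream length and in $t_1$.

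Finally, conditioning on at least $t_1$ completed episodes, the number owned by $g$ stochastically dominates $\mathrm{Binomial}(t_1, p_{g \mid \mathfrak{b}})$, whose mean is at least $t_1 p_{g \mid \mathfrak{b}}$; a Chernoff lower-tail bound then gives at least $(1-\delta) t_1 p_{g \mid \mathfrak{b}}$ checks except with probability $e^{-\delta^2 t_1 p_{g \mid \mathfrak{b}}/2}$, the third failure term. A union bound over the three bad events (too few substream packets, too few completed episodes, unlucky $g$-ownership) yields the stated success probability. The only delicate point is verifying that the conditioning used in the first two steps does not disturb the independence of the per-episode $g$-ownership indicators invoked here, which holds because each episode's owner is determined by a fresh packet drawn after the previous episode boundary.
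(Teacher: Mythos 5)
Your proposal is correct in outline and follows essentially the same route as the paper's proof: the paper likewise (i) applies a Chernoff bound to the binomial substream length $\sizeof{S_{\mathfrak{b}}}$, (ii) writes the stream consumed by $t$ checks as a sum of independent geometric variables $X_{i,j} \sim Geo(p_{f_i \mid \mathfrak{b}})$ with $p_{f_i \mid \mathfrak{b}} \geq p_{\text{min}}$ (your negative-binomial episode lengths) and bounds its upper tail with a Chernoff-type bound for geometric sums to conclude that at least $t_1$ checks occur, and (iii) finishes with a binomial Chernoff lower-tail bound on how many of those $t_1$ checks land on prefix $g$, union-bounding exactly the same three bad events with the same three failure terms. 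The one spot where you diverge is bookkeeping rather than approach: you normalize the episode-owner distribution by $\sum_{f' \text{ checkable}} p_{f'}$, which introduces a factor that the $(1+\varepsilon/2)$ slack does not absorb in general, whereas the paper weights owners by $p_{f \mid \mathfrak{b}}$ directly so that $\E X = t C F_{\mathfrak{b}}$ comes out exactly---a discrepancy traceable to the lemma's idealized assumptions, not to your argument's structure.
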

\begin{proof}
Let $S_{\mathfrak{b}}$ the substream of $S$ that is hashed to $\mathfrak{b}$.
Given $\sizeof{S}$, the length $\sizeof{S_{\mathfrak{b}}}$ of substream $S_{\mathfrak{b}}$ is a random variable, $\E \sizeof{S_{\mathfrak{b}}} = \sizeof{S} \sum_{g \in \mathfrak{b}} p_g$, then by Chernoff bound,
\begin{equation} \label{ineq:upper_sizeof_S_b}
	\Pr[\sizeof{S_{\mathfrak{b}}} < (1-\varepsilon) \E \sizeof{S_{\mathfrak{b}}}]
	< e^{-\frac{\varepsilon^2 \E \sizeof{S_{\mathfrak{b}}}}{3}}
	= e^{-\frac{\varepsilon^2 \sizeof{S} \sum_{g \in \mathfrak{b}} p_g}{3}}.
\end{equation}

Let $t$ be a random variable denoting the number of checks in $\mathfrak{b}$.
Let random variable $X_{i,j}$ be the number of packets hashed to $\mathfrak{b}$ after seeing the $j$th packet till receiving the $(j+1)$st packet from the currently monitored flow, where $i \in [t]$ and $j \in [C]$.
$X_{i,j}$s are independent geometric random variables, and $X_{i,j} \sim Geo(p_{f_i \mid b})$, where $f_i$ is the flow under scrutiny during the $i$th check, by Assumption~\ref{assump:T}, $p_{f_i \mid b} \geq p_{\text{min}}$.
Next we look at $X = \sum_{i=1}^t \sum_{j=1}^C X_{i,j}$, the length of the substream in $\mathfrak{b}$ after $t$ checks,
\begin{align} \label{eqn:EX}
	\E X
	= \sum_{i=1}^t \sum_{j=1}^C \E X_{i,j}
	= \sum_{i=1}^t \sum_{j=1}^C \sum_{\substack{f \in \mathfrak{b}: \\ p_{f \mid b} \geq p_{\text{min}}}} p_{f \mid b} \cdot \frac{C}{p_{f \mid b}}
	= t C F_{\mathfrak{b}},
\end{align}
where $F_{\mathfrak{b}} = \sizeof{\{f \in b \mid p_{f \mid b} \geq p_{\text{min}} \}}$.
By the Chernoff-type tail bound for independent geometric random variables (Theorem 2.1 in~\cite{janson2018tail}), for any $\varepsilon \in (0,1)$,
\begin{align} \label{ineq:upper_X}
	\Pr[X > (1+\frac{\varepsilon}{2}) \E X]
	< e^{-p_{\text{min}} \E X (\frac{\varepsilon}{2} - \ln{(1+\frac{\varepsilon}{2})})}
	\leq e^{-p_{\text{min}} t C F_{\mathfrak{b}} \cdot \frac{\varepsilon^2}{24}}.
\end{align}

Let $t_1$ be the largest $t$ such that $(1+\frac{\varepsilon}{2}) \E X < \E \sizeof{S_{\mathfrak{b}}}$, we have $t_1 = \floor{\frac{\sizeof{S} \sum_{g \in \mathfrak{b}} p_g}{(1+\frac{\varepsilon}{2}) C F_{\mathfrak{b}}}}$.
Consider two events:
\begin{enumerate}[(i)]
\item
The number of checks $t$ on seeing $S_{\mathfrak{b}}$ is less than $t_1$.

Applying~\ref{ineq:upper_X} on $t_1$, we have that with probability at most $e^{-p_{\text{min}} t_1 C F_{\mathfrak{b}} \cdot \frac{\varepsilon^2}{24}}$, after seeing $(1 - \varepsilon) \E \sizeof{S_b}$ packets, the number of checks is at most $t_1$.
Together with~\ref{ineq:upper_sizeof_S_b}, by union bound, 
\begin{equation} \label{ineq:event1}
	\Pr[t < t_1] 
	< e^{-p_{\text{min}} t_1 C F_{\mathfrak{b}} \cdot \frac{\varepsilon^2}{24}} + e^{-\frac{\varepsilon^2 \sizeof{S} \sum_{g \in \mathfrak{b}} p_g}{3}}.
\end{equation}

\item 
Prefix $g$ is checked less than $(1- \delta) t_1 p_{g \mid \mathfrak{b}}$ times.
By Chernoff bound, this event holds with probability at most $e^{-\frac{\delta^2 t_1 p_{g \mid \mathfrak{b}}}{2}}$.
\end{enumerate}
The Lemma follows from applying the
union bound over these two events.
\end{proof}

Counterintuitively, the proof of Lemma~\ref{lem} suggests hash collisions are in fact harmless in the flow-sampling algorithm, for a flow that is not too small (which corresponds to $p_{f \mid \mathfrak{b}}$ greater than some $p_{\text{min}}$ in Assumption~\ref{assump:T}).
To see that, suppose we add another heavy flow to bucket $\mathfrak{b}$, $\E \sizeof{S_{\mathfrak{b}}}$ would increase by some factor $x$, which means $\E X$ would increase by the same factor.
Since $F_{\mathfrak{b}}$ would only increase by $1$, if $F_{\mathfrak{b}}$ is large enough, by~\eqref{eqn:EX}, $t$ would also increase by roughly a factor of $x$, while $p_{f \mid \mathfrak{b}}$ decreases by roughly a factor of $x$.
Then $t \cdot p_{f \mid \mathfrak{b}}$ is about the same with or without the added heavy flow.
Therefore, colliding with heavy flows does not decrease the number of checks of a flow that is not too small, as long as the total number of flows in a bucket is large enough, which is usually the case in practice.

\subsubsection{Decrease the number of false positives} \label{sec:algo:array:all}

Since the parameters of the flow-sampling algorithm are chosen so that many flows are sampled, and some might get sampled multiple times, 
it is possible for the algorithm to capture many out-of-order events, but not every one of them indicates that the prefix is out-of-order heavy.
After all, there is only a weak correlation between the out-of-orderness of flows and that of their prefixes, not to mention that even if the correlation is stronger, we are inferring the extent of reordering on a scale much larger than the snippets of flows that we observe.
In such cases, the algorithm could output many false positives.

To reduce the number of false positives, we could imagine feeding the control plane more information, so that the algorithm can make a more informed decision about whether the fraction of out-of-order packets exceeds $\varepsilon$, for each reported prefix.
To this end, we modify the flow-sampling algorithm 
to always report before eviction, even if the number of out-of-order packets is below threshold $R$.
Again denote $\{(g, n_i, o_i)\}_{i=1}^r$ as the set of all reports corresponding to a prefix $g$, the control plane outputs $g$ if $\sum_{i=1}^r n_i \geq \alpha$, and $\frac{\sum_{i=1}^r o_i}{\sum_{i=1}^r n_i} > c \cdot \varepsilon$, for some tunable parameter $0<c \leq 1$.
The parameter $c$ compensates for the fact that we only monitor a subset of the traffic, so the exact fraction of out-of-order packets we observe might not directly align with $\varepsilon$.

\subsection{Separate large flows} \label{sec:algo:hybrid}

Though hash collisions generally do not affect the flow-sampling algorithm's ability to check flows that are not too small, there is still the possibility that a small flow just so happens to arrive and finish during the short period when another flow is being monitored in that bucket.
Such a small flow would never get a second chance to enter the data structure.
If we could instead continuously monitor some large flows in a separate data structure, then for a small flow $f$ that is hashed to a bucket $\mathfrak{b}$ that no longer contains large flows, $p_{f \mid \mathfrak{b}}$ would increase, which would increase the number of checks it gets.
For some prefixes whose out-of-order packets concentrate only in one small flow, separating large flows greatly improves the chance of catching them.

Therefore, we propose a hybrid scheme, where the packets first go through a heavy-hitter (HH) data structure, and the array only admits flows that are not being monitored in the HH data structure. 
We again assign flows with the same prefix to the same set of buckets, and the array part of the data structure behaves exactly as depicted in \S~\ref{sec:algo:array:original}. 
For the HH part, we report flows whose fraction of out-of-order packets is above $\varepsilon$.
For the specifics on the HH data structure, we refer the readers to~\ref{sec:precision}.

Note that a subtly different design choice would be to have the array admit the set of flows whose prefixes are not being monitored in the HH data structure.
This would have made more sense, if all the heavily reordered prefixes have most of their out-of-order packets concentrated among the heaviest flows in that prefix.
But as we have seen in Figure~\ref{fig:violin}, this is not always the case.
Compared to our proposed hybrid algorithm, this variant would be less accurate.
However, it certainly reduces the number of false positives and the number of reports generated by the data-plane algorithm, since in this case, a much smaller set of flows would be monitored by the array.
In this work, we choose to prioritize accuracy over other aspects, so we prefer the hybrid algorithm in last paragraph to this variant.

In any practical setting, the correct memory allocation between the HH data structure and the array in the hybrid scheme depends on the workload properties: the relationship of flows to prefixes, the heaviness of flows and prefixes, and where the reordering actually occurs.
Next we understand how these algorithms behave under real-world workloads.
\section{Evaluation} \label{sec:eval}

We start this section by evaluating our flow-sampling algorithm and hybrid scheme (\S~\ref{sec:eval:compare}) using a Python simulator on real-world traces introduced in \S~\ref{sec:traffic}.
As much as we wish that each trace is representative, we cannot simply assume that every network administrator running our algorithms in their networks would get the exact same performance.
Therefore, we delve into the intricacies of multiple distributions underlying the real-world traffic workload, to explain how they affect the performance of our algorithms. 
In \S~\ref{sec:eval:lucid}, we verify that our P4 prototype of the flow-sampling algorithm for the Tofino1 switch only consumes a small amount of hardware resources, as promised.
Finally, we recognize that the optimal parameters for our algorithms are often workload dependent.
Thus, we do not attempt to always find the optimum; instead, we show in \S~\ref{sec:eval:compare} that \emph{reasonably} chosen parameters already give good performance.
In \S~\ref{sec:eval:params}, we see that the parameters we used previously for evaluations are indeed representative, and the algorithms are robust against small perturbations.

\subsection{Performance comparisons} \label{sec:eval:compare}

\subsubsection{Metrics}

We begin by introducing the three metrics we use throughout this section to evaluate our algorithms.
Let $\hat{G}$ denote the set of prefixes output by an algorithm $\mathcal{A}$.
\begin{itemize}
\item \textbf{Accuracy}:
Let $G_{\geq \beta} = \{g^* \in S \mid N_{g^*} \geq \beta, O_{g^*} > \varepsilon \sum_{g \in S} O_g \}$ be the ground truth set of heavily reordered prefixes with at least $\beta$ packets.
Define the \emph{accuracy} $A$ of algorithm $\mathcal{A}$ to be the fraction of ground-truth prefixes output by $\mathcal{A}$, that is, 
\[
	A(\mathcal{A}) = \frac{\sizeof{\hat{G} \cap G_{\geq \beta}}}{\sizeof{G_{\geq \beta}}}.
\] 
\item \textbf{False-positive rate}:
Let $G_{> \alpha} = \{g^* \in S \mid N_{g^*} > \alpha, O_{g^*} > \varepsilon \sum_{g \in S} O_g \}$, 
then the \emph{false-positive rate} of $\mathcal{A}$ is defined as
\[
	FP(\mathcal{A}) = \frac{\sizeof{\hat{G} \setminus G_{\geq \alpha}}}{\sizeof{G_{\geq \alpha}}}.
\]
\item \textbf{Communication overhead}:
The \emph{communication overhead} from the data plane to the control plane is defined as the number of reports sent by $\mathcal{A}$, divided by the length of stream $S$, where the number of reports also accounts for the
flow records in the data structure that exceed the reporting thresholds.
\end{itemize}

Unless otherwise specified, each experiment is repeated five times with different seeds to the hash functions, and with parameters $T = 2^{-15}, C = 2^4$, $R_{\text{array}} = 1$, $R_{\text{HH}} = 0.01$, and $d_{\text{HH}}=2$ (see \S~\ref{sec:precision} for details on the parameters of the HH data structure).
We are interested in identifying prefixes with at least $\beta=2^7$ packets, with more than $\varepsilon = 0.01$ fraction of their packets reordered.
Additionally, we do not wish to output prefixes with at most $\alpha = 2^4$ packets, irrespective of their out-of-orderness.

\subsubsection{Performance evaluation} \label{sec:eval:compare:eval}

To the best of our knowledge, we are the first to consider the problem of detecting heavily reordered prefixes, and existing related works are not directly comparable. 
We therefore compare our proposed algorithms to a heavy-hitter (HH) data structure that tracks reordering (\S~\ref{sec:precision}).
Figure~\ref{fig:arr_hybrid_hh} shows the performance of the flow-sampling algorithm, the hybrid scheme, and the HH data structure using \ref{def:ooo_dec}, on a $5$-minute campus trace consisting of $82,359,405$ server-to-client packets, which come from $545,973$ flows and $16,988$ 24-bit source IP prefixes. 
In fact, the specific length of the trace, and whether we choose to study reordering events of \ref{def:ooo_dec} or \ref{def:ooo_inc}, do not affect the overall trend of these curves. 
Due to space limitations, we show the same evaluation (Figure~\ref{fig:arr_hybrid_hh_caida}) on a 10-minute CAIDA 2019~\cite{caida19} trace using \ref{def:ooo_inc} in \S~\ref{sec:supp_eval}.

\begin{figure*}[t]
\centering
\includegraphics[width=\textwidth]{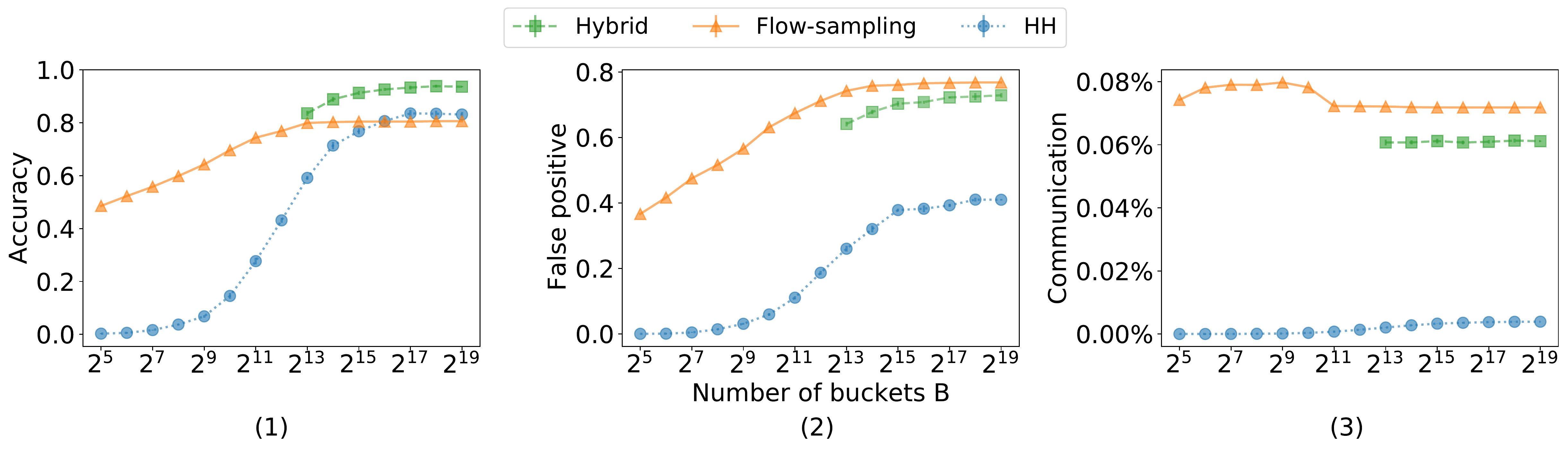}
\vspace{-2.1em}
\caption{
The flow-sampling algorithm achieves great accuracy in small memory ranges, and the hybrid scheme further improves the accuracy when more memory is available.
}
\label{fig:arr_hybrid_hh}
\end{figure*}

\begin{figure*}[t]
\centering
\includegraphics[width=\textwidth]{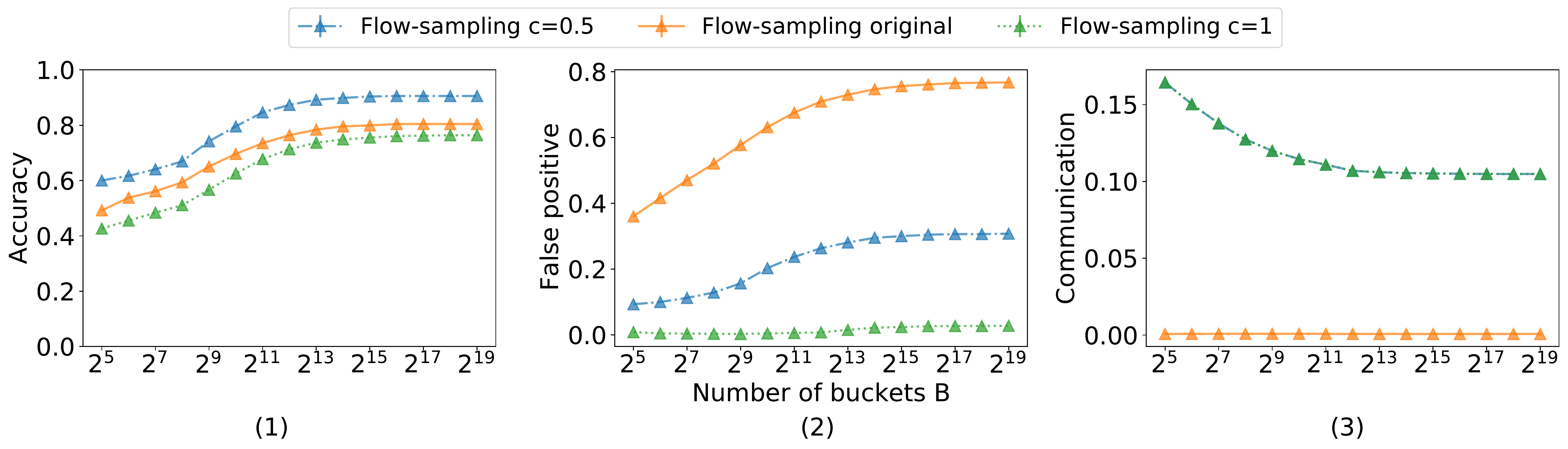}
\vspace{-2.1em}
\caption{
Through sending more reports to the control plane, we can decrease the false-positive rate of the flow-sampling algorithm while further improving its accuracy.
}
\label{fig:arr_eval}
\vspace{-1em}
\end{figure*}

\paragraph{Flow-sampling algorithm achieves great accuracy with small memory.}

If heavy reordering were concentrated in large flows, the HH data structure would perform very well with a small amount of memory.
As seen in \S~\ref{sec:traffic:dist}, real-world traffic does not always behave in that way, rendering the HH data structure ineffective when the memory is small compared to the number of prefixes ($2^{14}$).
This is where the performance of the flow-sampling algorithm significantly dominates that of the HH data structure.
Note that this particular trace contains more than $2^{19}$ flows and more than $2^{14}$ prefixes. 
However, using only $2^5$ buckets, the original version of the flow-sampling algorithm is already capable of reporting half of the out-of-order prefixes. 
To put it into perspective, reordering happens at the flow level, and assigning even one bucket per prefix to detect reordering already requires a nontrivial solution, while the flow-sampling algorithm achieves good accuracy using orders-of-magnitude less memory.

If we are willing to generate reports for more than $10\%$ of the traffic, with an increased communication overhead comes a reduced false-positive rate (Figure~\ref{fig:arr_eval}).
Moreover, with a more carefully chosen parameter $c$ that controls how many prefixes to report (\S~\ref{sec:algo:array:all}), the extra information sent to the control plane helps in further improving the accuracy.

\paragraph{The hybrid scheme improves the accuracy when given more memory.}

To fairly compare the hybrid scheme with the flow-sampling algorithm, we need to determine the optimal memory allocation between the HH data structure and the array.
Lacking a better way to optimize the memory allocation, we turn to experiments with our packet trace.
Given a total of $B$ buckets, we assign $\floor{x \cdot B}$ buckets to the HH data structure, $B - \floor{x B}$ buckets to the array, and conduct a grid search on $x \in I = \{0.1, \dots, 0.9\}$ to find the value of $x$ that maximizes the performance of the hybrid scheme.
We evaluate the hybrid scheme using the optimal $x$ we found for each $B$.

Admittedly, grid $I$ may not be fine-grained enough to reveal the true optimal allocation\edit{;} nonetheless, it conveys the main idea.
When available memory is small, the accuracy gap between the HH data structure and the flow-sampling algorithm is huge, sparing part of the memory for filtering large flows does not improve over the flow-sampling algorithm.
As memory increases, the accuracy gap between the flow-sampling algorithm and the HH data structure decreases, and the hybrid scheme starts to show accuracy gains.

\subsubsection{Performance discrepancies of the flow-sampling algorithm under different workloads}

\begin{figure*}[t]
\centering
\begin{subfigure}[t]{0.32\linewidth}
\centering
\includegraphics[width=\textwidth]{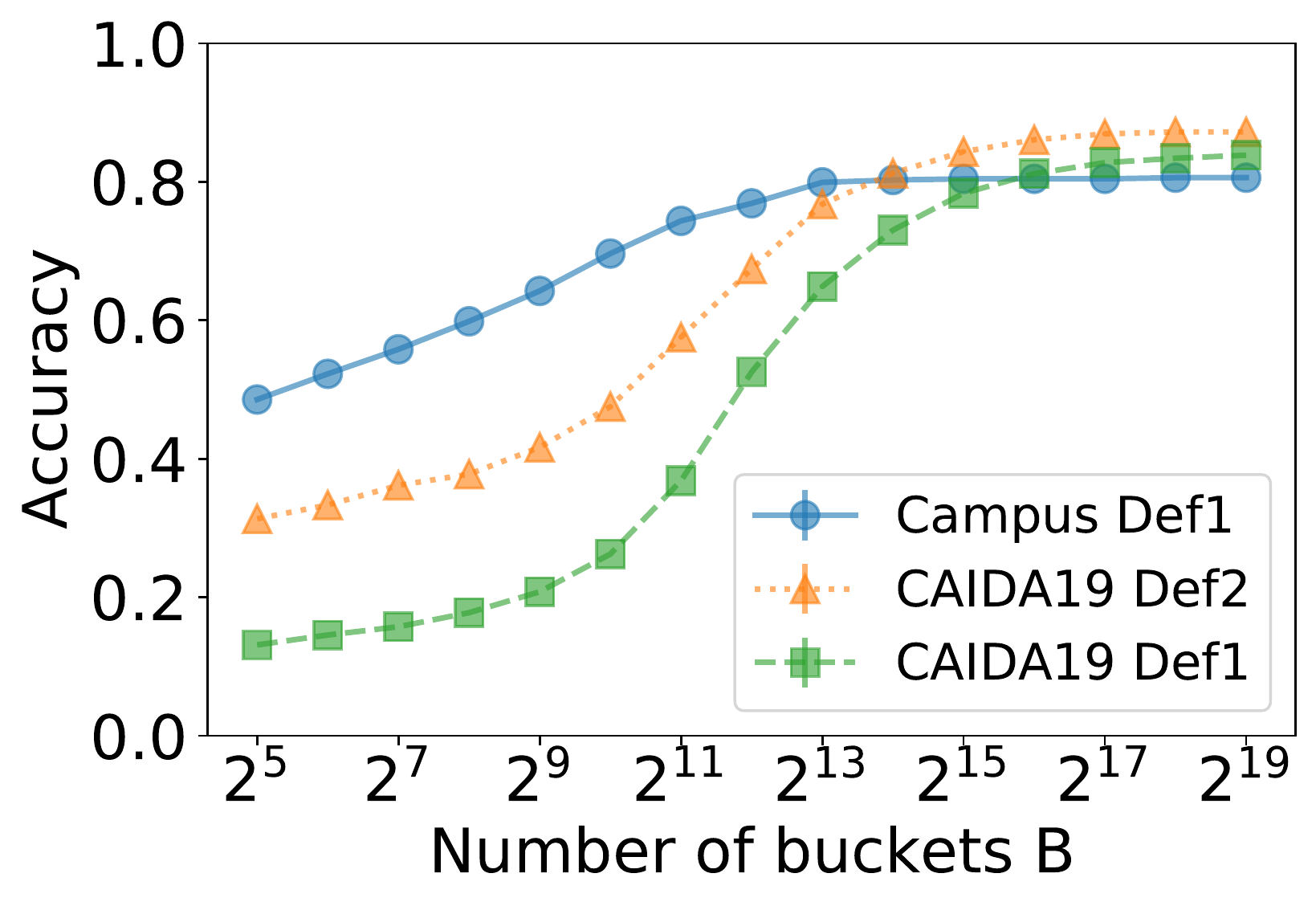}
\caption{
The accuracy of the flow-sampling algorithm may differ under different workloads.
}
\label{fig:array_accuracy_3traces}
\end{subfigure} \hfill
\begin{subfigure}[t]{0.32\linewidth}
\centering
\includegraphics[width=\textwidth]{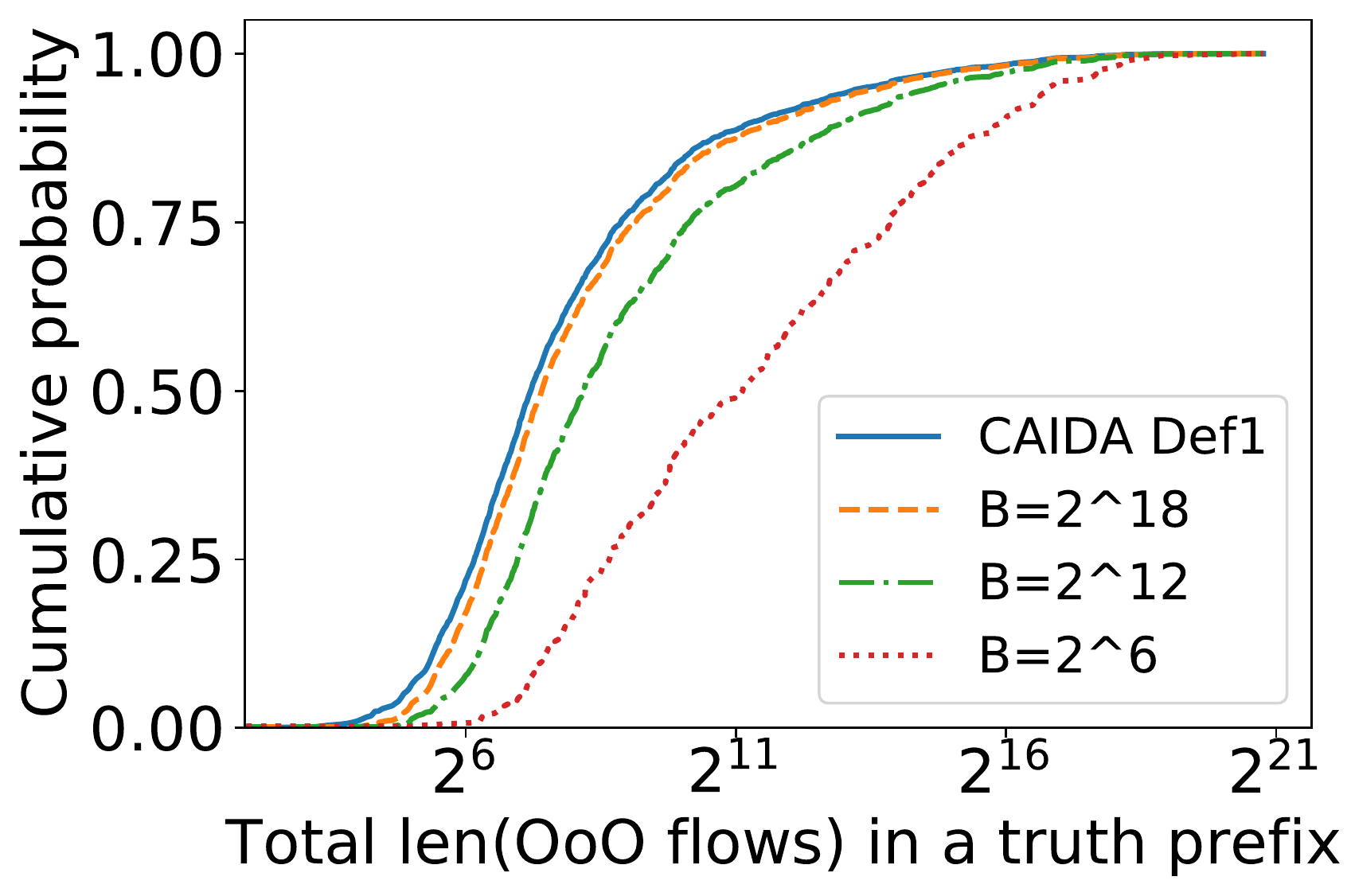}
\caption{
A heavily reordered prefix is easier to capture if the total length of its flows with reordered packets is longer.
}
\label{fig:cdf_len_reordered_flows}
\end{subfigure} \hfill
\begin{subfigure}[t]{0.32\linewidth}
\centering
\includegraphics[width=\textwidth]{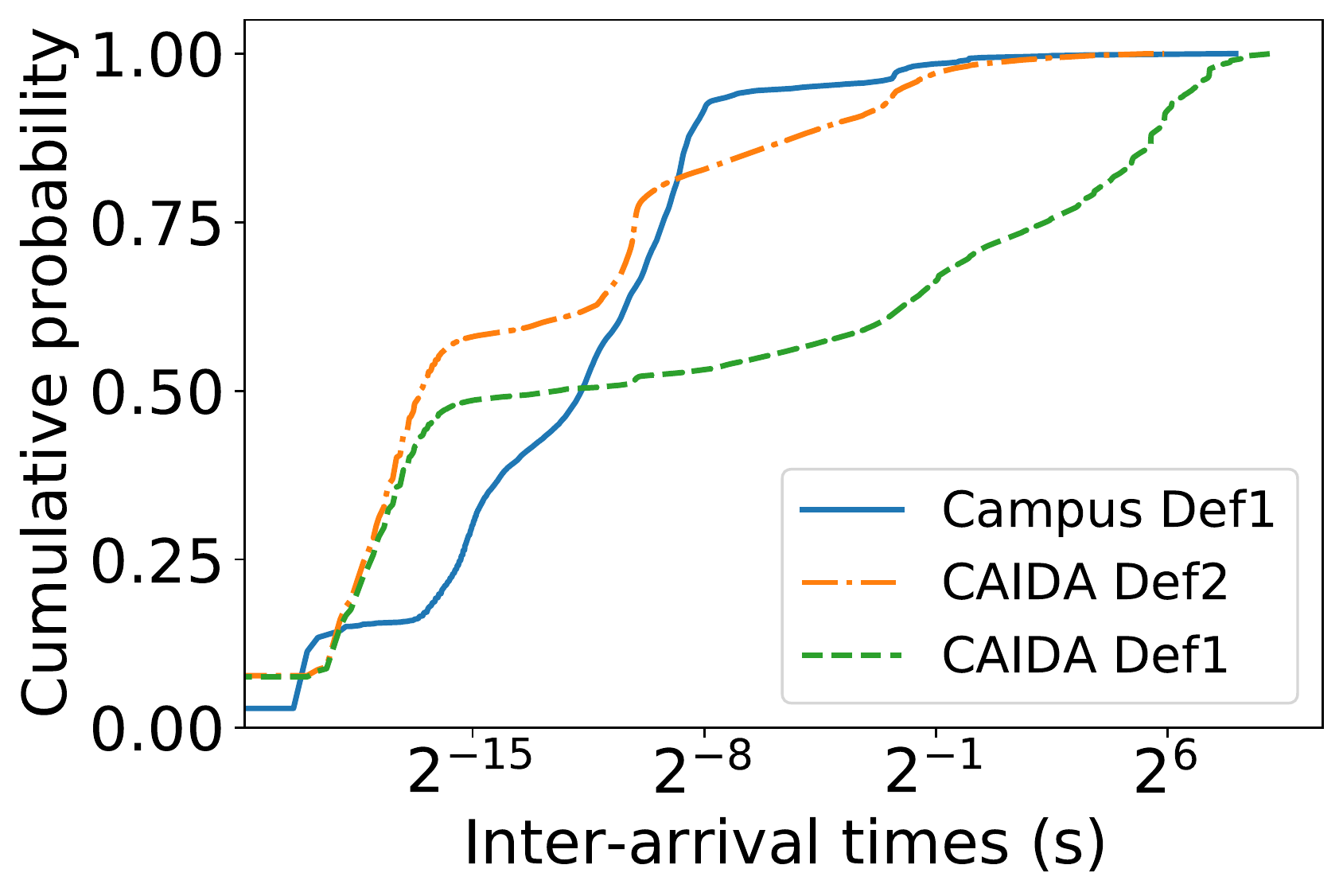}
\caption{
The algorithm is more accurate with small memory when reordered packets arrive shortly after their predecessors.
}
\label{fig:inter_ooo}
\end{subfigure}
\vspace{-0.9em}
\caption{
The accuracy of the flow-sampling algorithm is workload dependent.
}
\vspace{-1em}
\end{figure*} 

In our numerous experiments on different traces, the accuracy of the flow-sampling algorithm always dominates that of the HH data structure, when given much less memory than the number of prefixes in the trace.
However, we cannot always expect to catch $50\%$ of the heavily reordered prefixes using just $B = 2^5$ buckets.
For instance, Figure~\ref{fig:array_accuracy_3traces} shows the accuracy of the flow-sampling algorithm when running on a $5$-minute campus trace using \ref{def:ooo_dec}, and a $10$-minute CAIDA 2019~\cite{caida19} trace using \ref{def:ooo_dec} as well as \ref{def:ooo_inc}.
The results are evidently workload-dependent, but what exactly are the traffic characteristics that dictate such performance discrepancies?
The answer to this question epitomizes the intricacies involved in understanding the multiple distributions present in real-world traffic.

To identify the subset of traffic that directly affects accuracy, we go back to how the flow-sampling algorithm reports a prefix.
If we look at a heavily reordered prefix, its flows enter the data structure from time to time. 
But for the algorithm to report it, the array has to see some flows from this prefix that actually have out-of-order packets. 
The perfectly in-order flows would never contribute to the reporting of its prefix. 
Now, suppose the reordered packets appear uniformly at random during the time its flow is being monitored, then what matters is the total length (number of packets) of the flows that have out-of-order packets in this prefix. 
The higher the length, the easier it is for the flow-sampling algorithm to catch it.
This is in fact an indirect implication of Lemma~\ref{lem}.
It can also be seen in Figure~\ref{fig:cdf_len_reordered_flows}, which shows the CDF of the total length of the flows that have out-of-order packets among all heavily reordered prefixes reported by the flow-sampling algorithm using different memory sizes.
The ground-truth prefixes reported by the smallest memory are the easiest to catch, and the total length of out-of-order flows in such prefixes tends to be larger.
As we increase the memory size, the algorithm reports more ground-truth prefixes with shorter total lengths of out-of-order flows.

However, this is not the whole story.
For the traces in Figure~\ref{fig:array_accuracy_3traces}, the CDFs of the total length of reordered flows in ground-truth prefixes turn out to be similar in shape.
So what else in the traffic distribution is causing the difference in accuracy?
The caveat is that reordered packets may not appear uniformly at random, and their inter-arrival times play a major role as well.
For each dataset in Figure~\ref{fig:array_accuracy_3traces}, we plot the inter-arrival times of their out-of-order packets in the ground truth.
We see that the campus trace, for which the flow-sampling algorithm is the most accurate in the small-memory regime, has $85\%$ of its out-of-order packets in the ground truth arriving within $2^{-8.6}$ seconds of its predecessor in the same flow.
In contrast, in the CAIDA trace, more than $15\%$ of the out-of-order packets corresponding to \ref{def:ooo_dec} do not arrive until $32$ seconds after its predecessor's arrival.
When the memory is small, to sieve through many flows and prefixes, we simply cannot afford wasting much time on one flow, since we may then end up missing many out-of-order events with large inter-arrival times.

\subsection{Hardware feasibility} \label{sec:eval:lucid}

\begin{wraptable}{r}{0.5\textwidth}
\centering
\begin{tabular}{ |c|c|c| } 
\hline
\textbf{Resources} & $B=2^{8}$ & $B=2^{16}$ \\
\hline \hline
TCAM               & $19.05\%$  &  $19.05\%$ \\ 
\hline
SRAM               & $5.00\%$   &  $23.93\%$ \\ 
\hline
Hash units         & $16.67\%$  &  $16.67\%$ \\ 
\hline
Instructions       & $15.18\%$  &  $14.29\%$ \\ 
\hline
\end{tabular}
\vspace{0.5em}
\caption{Data-plane resource usage in Tofino1.}
\label{tab:resources}
\vspace{-2em}
\end{wraptable}

We implement a P4 prototype of the flow-sampling algorithm on a Tofino1 switch using $140$ lines of code in Lucid~\cite{sonchack2021lucid}.
The Lucid-compiled P4 program takes $58.33\%$ of the pipeline stages in Tofino1, while manual inspection of the resulting P4 code shows that $25\%$ of them are overhead from the Lucid compiler.
Even with the overhead introduced by the Lucid compiler, using not even $25\%$ of the resources in the first $58.33\%$ pipeline stages, we are able to report $80.44\%$ of the heavily reordered prefixes in the $5$-minute campus trace using \ref{def:ooo_dec}, and $81.19\%$ and $86.08\%$ in the $10$-CAIDA 2019 trace using \ref{def:ooo_dec} and \ref{def:ooo_inc} respectively.
Out of the $58.33\%$ pipeline stages the algorithm makes use of, the resource usage of the prototype with different number of buckets is summarized in Table~\ref{tab:resources}.
In contrast, merely storing the per-flow states for a $10$-minute CAIDA 2019~\cite{caida19} trace could take more register memory than a Tofino1 switch could offer.

\subsection{Parameter robustness} \label{sec:eval:params}

We started the evaluation using reasonably chosen
parameters. 
Now we verify that all parameters in our algorithms are either easily set, or robust to changes.

To reveal how thresholds $T$ and $C$ individually affect the accuracy of the flow-sampling algorithm, ideally we want to fix one of them to infinity, and vary the other.
In this way, only one of them governs the frequency of evictions.
Applying this logic, when studying the effect of $T$ (Figure~\ref{fig:T}), we fix $C$ to a number larger than the length of the entire trace.
We see that as long as $T$ is small, the algorithm samples enough flows, and has high accuracy.

Evaluating the effects on a varying $C$ turns out to be less straight-forward.
If we make $T$ too large, the algorithm generally suffers from extremely poor performance, which makes it impossible to observe any difference that changing $C$ might bring.
If $T$ is too small, the frequency of eviction would be primarily driven by $T$, and $C$ would not have any impact.
And it is not as simple as setting $T$ larger than all inter-arrival times, since eviction only occurs on hash collisions, inter-arrival time alone only paints part of the picture.
All evidence above points to the fact that $T$ is the more important parameter.
Once we have a good choice of $T$, the accuracy boost from optimizing $C$ is secondary.
Armed with this knowledge, we fix a $T = 2^5$, an ad hoc choice that is by no means perfect. 
Yet it is enough to observe (Figure~\ref{fig:C}) that having a small $C$ is slightly more beneficial.

However, $C$ cannot be too small, as inserting a new flow record into the array requires recirculation in the hardware implementation.
Programmable switches generally support recirculating up to $3\%-10\%$ of packets without penalty. 
Here we set $C$ to be $16$, which allows us to achieve line rate.

Given that each non-small flow is continuously monitored for roughly $C=16$ packets at a time, we report its prefix to the control plane when we encounter any out-of-order packet, that is, $R=1$.

\section{Related work} \label{sec:related}

\textbf{Characterization of out-of-orderness on the Internet.}
Packet reordering is first studied in the seminal work by Paxson~\cite{paxson1997end}.
It has since been well understood that packet reordering can be caused by parallel links, routing changes, and the presence of adversaries~\cite{bennett1999packet}.
In typical network conditions, only a small fraction of packets are out-of-order~\cite{paxson1997end,wang2004study}.
However, when the network reorders packets, TCP endpoints may wrongly infer that the network is congested, harming end-to-end performance by retransmitting packets and reducing the sending rate~\cite{bennett1999packet,laor2002effect,leung2007overview}.
Metrics for characterizing reordering are intensively studied in~\cite{morton2006packet} and~\cite{jayasumana2008improved}, though many of the proposed metrics are more suitable for offline analysis.
In addition to the network causing packet reordering, the stream of packets in the same TCP connection can appear out of order because congestion along the path leads to packet losses and subsequent retransmissions.  Our techniques for identifying IP prefixes with heavy reordering of TCP packets are useful for pinpointing network paths suffering from both kinds of reordering---whether caused by the network devices themselves or induced by the TCP senders in response to network congestion.

\textbf{Data-plane efficient data structures for volume-based metrics.}
For heavy-hitter queries, HashPipe~\cite{sivaraman2017heavy} adapts SpaceSaving~\cite{metwally2005efficient} to work with the data-plane constraints, using a multi-stage hash-indexes array.
PRECISION~\cite{basat2020designing} further incorporates the idea of Randomized Admission Policy~\cite{basat2019randomized} to better deal with the massive number of small flows generally found in network traffic.
We extend PRECISION to keep reordering statistics for large flows.
However, such an extension cannot be used to detect flows with a large number of out-of-order packets with a reasonable amount of memory.

\textbf{Data-plane efficient data structures for performance metrics.}
Liu et al.~\cite{liu2020memory} proposes memory-efficient algorithms for identifying flows with high latency, or lost, reordered, and retransmitted packets.
Several solutions for measuring round-trip delay in the data plane~\cite{chen2020measuring,zheng2022unbiased,sengupta2022continuous} have a similar flavor to identifying out-of-order heavy prefixes, as in both cases keeping at least some state is necessary, with the difference that for reordering we generally need to match more than a pair of packets.

\textbf{Detecting heavy reordering in the data plane.}
Several existing systems can detect TCP packet reordering in the data plane.
Marple is a general-purpose network telemetry platform with a database-like query language~\cite{narayana2017language}.
While Marple can analyze out-of-order packets, the compiler generates a data-plane implementation that requires per-flow state.
Unfortunately, such methods consume more memory than the programmable switch can offer in practice.
The algorithm proposed by Liu et al.~~\cite{liu2020memory}
for detecting flows with a large number of out-of-order packets
remains the work most related to ours.
We note that our lower bound on memory consumption in \S~\ref{sec:prob:flow:lb} is stronger than a similar lower bound (Lemma 10) in~\cite{liu2020memory}, as we also allow randomness and approximation.
Liu et al.~\cite{liu2020memory} considers out-of-order events specified by \ref{def:ooo_max}, and works around the lower bound by assuming out-of-order packets always arrive within some fixed period of time.
In contrast, we circumvent the lower bound using the more natural observation that out-of-orderness is correlated among flows within a prefix, and identify heavily reordered prefixes instead of flows.
\section{Conclusion} \label{sec:conclusion}

In this paper, we introduce two algorithms for identifying out-of-order prefixes in the data plane.
In particular, our flow-sampling algorithm achieves good accuracy empirically, even with memory that is orders-of-magnitude smaller than the number of prefixes, let alone the number of flows.
When given memory comparable to the number of prefixes, our  hybrid scheme using both a heavy-hitter data structure and flow sampling slightly improves the accuracy.

Notice that measuring reordering is fundamentally memory-intensive, yet we leverage the correlation of out-of-orderness among flows in the same prefix so that compact data structures can be effective.
In fact, there is nothing special about out-of-orderness.
Other properties of a network path could very well lead to similar correlation.
For other performance metrics that suffer from memory lower bounds, it would be intriguing to see whether such correlation helps in squeezing good performance out of limited memory.
We leave this as future work.


\bibliographystyle{ACM-Reference-Format}
\bibliography{reference}


\begin{thebibliography}{34}


\ifx \showCODEN    \undefined \def \showCODEN     #1{\unskip}     \fi
\ifx \showDOI      \undefined \def \showDOI       #1{#1}\fi
\ifx \showISBNx    \undefined \def \showISBNx     #1{\unskip}     \fi
\ifx \showISBNxiii \undefined \def \showISBNxiii  #1{\unskip}     \fi
\ifx \showISSN     \undefined \def \showISSN      #1{\unskip}     \fi
\ifx \showLCCN     \undefined \def \showLCCN      #1{\unskip}     \fi
\ifx \shownote     \undefined \def \shownote      #1{#1}          \fi
\ifx \showarticletitle \undefined \def \showarticletitle #1{#1}   \fi
\ifx \showURL      \undefined \def \showURL       {\relax}        \fi
\providecommand\bibfield[2]{#2}
\providecommand\bibinfo[2]{#2}
\providecommand\natexlab[1]{#1}
\providecommand\showeprint[2][]{arXiv:#2}

\bibitem[Aad et~al\mbox{.}(2008)]%
        {aad2008impact}
\bibfield{author}{\bibinfo{person}{Imad Aad}, \bibinfo{person}{Jean-Pierre
  Hubaux}, {and} \bibinfo{person}{Edward~W Knightly}.}
  \bibinfo{year}{2008}\natexlab{}.
\newblock \showarticletitle{Impact of denial of service attacks on ad hoc
  networks}.
\newblock \bibinfo{journal}{\emph{IEEE/ACM Transactions on Networking}}
  \bibinfo{volume}{16}, \bibinfo{number}{4} (\bibinfo{year}{2008}),
  \bibinfo{pages}{791--802}.
\newblock


\bibitem[Basat et~al\mbox{.}(2019)]%
        {basat2019randomized}
\bibfield{author}{\bibinfo{person}{Ran~Ben Basat}, \bibinfo{person}{Xiaoqi
  Chen}, \bibinfo{person}{Gil Einziger}, \bibinfo{person}{Roy Friedman}, {and}
  \bibinfo{person}{Yaron Kassner}.} \bibinfo{year}{2019}\natexlab{}.
\newblock \showarticletitle{Randomized admission policy for efficient top-k,
  frequency, and volume estimation}.
\newblock \bibinfo{journal}{\emph{IEEE/ACM Transactions on Networking}}
  \bibinfo{volume}{27}, \bibinfo{number}{4} (\bibinfo{year}{2019}),
  \bibinfo{pages}{1432--1445}.
\newblock


\bibitem[Basat et~al\mbox{.}(2020)]%
        {basat2020designing}
\bibfield{author}{\bibinfo{person}{Ran~Ben Basat}, \bibinfo{person}{Xiaoqi
  Chen}, \bibinfo{person}{Gil Einziger}, {and} \bibinfo{person}{Ori
  Rottenstreich}.} \bibinfo{year}{2020}\natexlab{}.
\newblock \showarticletitle{Designing heavy-hitter detection algorithms for
  programmable switches}.
\newblock \bibinfo{journal}{\emph{IEEE/ACM Transactions on Networking}}
  \bibinfo{volume}{28}, \bibinfo{number}{3} (\bibinfo{year}{2020}),
  \bibinfo{pages}{1172--1185}.
\newblock


\bibitem[Bennett et~al\mbox{.}(1999)]%
        {bennett1999packet}
\bibfield{author}{\bibinfo{person}{Jon~CR Bennett}, \bibinfo{person}{Craig
  Partridge}, {and} \bibinfo{person}{Nicholas Shectman}.}
  \bibinfo{year}{1999}\natexlab{}.
\newblock \showarticletitle{Packet reordering is not pathological network
  behavior}.
\newblock \bibinfo{journal}{\emph{IEEE/ACM Transactions on Networking}}
  \bibinfo{volume}{7}, \bibinfo{number}{6} (\bibinfo{year}{1999}),
  \bibinfo{pages}{789--798}.
\newblock


\bibitem[Blanton and Allman(2002)]%
        {blanton2002making}
\bibfield{author}{\bibinfo{person}{Ethan Blanton} {and} \bibinfo{person}{Mark
  Allman}.} \bibinfo{year}{2002}\natexlab{}.
\newblock \showarticletitle{On making {TCP} more robust to packet reordering}.
\newblock \bibinfo{journal}{\emph{ACM SIGCOMM Computer Communication Review}}
  \bibinfo{volume}{32}, \bibinfo{number}{1} (\bibinfo{year}{2002}),
  \bibinfo{pages}{20--30}.
\newblock


\bibitem[{Broadcom}(2020)]%
        {broadcom}
\bibfield{author}{\bibinfo{person}{{Broadcom}}.}
  \bibinfo{year}{2020}\natexlab{}.
\newblock \bibinfo{title}{Silicon innovations in programmable switch hardware}.
\newblock
\newblock


\bibitem[CAIDA(2018)]%
        {caida18}
\bibfield{author}{\bibinfo{person}{CAIDA}.} \bibinfo{year}{2018}\natexlab{}.
\newblock \bibinfo{title}{The CAIDA UCSD Anonymized Internet Traces 2018 -
  December 20th}.
\newblock
  \bibinfo{howpublished}{\url{https://www.caida.org/data/passive/passive_dataset.xml}}.
\newblock


\bibitem[CAIDA(2019)]%
        {caida19}
\bibfield{author}{\bibinfo{person}{CAIDA}.} \bibinfo{year}{2019}\natexlab{}.
\newblock \bibinfo{title}{The CAIDA UCSD Anonymized Internet Traces 2019 -
  January 17th}.
\newblock
  \bibinfo{howpublished}{\url{https://www.caida.org/data/passive/passive_dataset.xml}}.
\newblock


\bibitem[Chen et~al\mbox{.}(2020)]%
        {chen2020measuring}
\bibfield{author}{\bibinfo{person}{Xiaoqi Chen}, \bibinfo{person}{Hyojoon Kim},
  \bibinfo{person}{Javed~M Aman}, \bibinfo{person}{Willie Chang},
  \bibinfo{person}{Mack Lee}, {and} \bibinfo{person}{Jennifer Rexford}.}
  \bibinfo{year}{2020}\natexlab{}.
\newblock \showarticletitle{Measuring {TCP} round-trip time in the data plane}.
  In \bibinfo{booktitle}{\emph{ACM SIGCOMM Workshop on Secure Programmable
  Network Infrastructure}}. \bibinfo{pages}{35--41}.
\newblock


\bibitem[Claise(2004)]%
        {netflow}
\bibfield{author}{\bibinfo{person}{Benoit Claise}.}
  \bibinfo{year}{2004}\natexlab{}.
\newblock \showarticletitle{{Cisco Systems NetFlow Services Export Version 9}}.
\newblock \bibinfo{journal}{\emph{{RFC 3954}}} (\bibinfo{year}{2004}).
\newblock


\bibitem[Foundation(2023)]%
        {dpdk}
\bibfield{author}{\bibinfo{person}{The~Linux Foundation}.}
  \bibinfo{year}{2023}\natexlab{}.
\newblock \bibinfo{title}{{The Data Plane Development Kit (DPDK)}}.
\newblock \bibinfo{howpublished}{\url{https://www.dpdk.org/}}.
\newblock


\bibitem[Herzberg and Shulman(2010)]%
        {herzberg2010stealth}
\bibfield{author}{\bibinfo{person}{Amir Herzberg} {and} \bibinfo{person}{Haya
  Shulman}.} \bibinfo{year}{2010}\natexlab{}.
\newblock \showarticletitle{Stealth DoS Attacks on Secure Channels}. In
  \bibinfo{booktitle}{\emph{Network and Distributed System Symposium}}.
\newblock


\bibitem[Intel(2023)]%
        {tofino}
\bibfield{author}{\bibinfo{person}{Intel}.} \bibinfo{year}{2023}\natexlab{}.
\newblock \bibinfo{title}{{Tofino: P4-programmable Ethernet switch ASIC that
  delivers better performance at lower power}}.
\newblock
  \bibinfo{howpublished}{\url{https://www.intel.com/content/www/us/en/products/network-io/programmable-ethernet-switch/tofino-series.html}}.
\newblock


\bibitem[Janson(2018)]%
        {janson2018tail}
\bibfield{author}{\bibinfo{person}{Svante Janson}.}
  \bibinfo{year}{2018}\natexlab{}.
\newblock \showarticletitle{Tail bounds for sums of geometric and exponential
  variables}.
\newblock \bibinfo{journal}{\emph{Statistics \& Probability Letters}}
  \bibinfo{volume}{135} (\bibinfo{year}{2018}), \bibinfo{pages}{1--6}.
\newblock


\bibitem[Jayasumana et~al\mbox{.}(2008)]%
        {jayasumana2008improved}
\bibfield{author}{\bibinfo{person}{Anura Jayasumana}, \bibinfo{person}{N
  Piratla}, \bibinfo{person}{T Banka}, \bibinfo{person}{A Bare}, {and}
  \bibinfo{person}{R Whitner}.} \bibinfo{year}{2008}\natexlab{}.
\newblock \bibinfo{title}{Improved packet reordering metrics}.
\newblock
\newblock
\newblock
\shownote{RFC 5236}.


\bibitem[Kamath et~al\mbox{.}(2021)]%
        {kamath2021simple}
\bibfield{author}{\bibinfo{person}{Akshay Kamath}, \bibinfo{person}{Eric
  Price}, {and} \bibinfo{person}{David~P. Woodruff}.}
  \bibinfo{year}{2021}\natexlab{}.
\newblock \showarticletitle{A Simple Proof of a New Set Disjointness with
  Applications to Data Streams}. In \bibinfo{booktitle}{\emph{Computational
  Complexity Conference}}.
\newblock


\bibitem[Laor and Gendel(2002)]%
        {laor2002effect}
\bibfield{author}{\bibinfo{person}{Michael Laor} {and} \bibinfo{person}{Lior
  Gendel}.} \bibinfo{year}{2002}\natexlab{}.
\newblock \showarticletitle{The effect of packet reordering in a backbone link
  on application throughput}.
\newblock \bibinfo{journal}{\emph{IEEE Network}} \bibinfo{volume}{16},
  \bibinfo{number}{5} (\bibinfo{year}{2002}), \bibinfo{pages}{28--36}.
\newblock


\bibitem[Leung et~al\mbox{.}(2007)]%
        {leung2007overview}
\bibfield{author}{\bibinfo{person}{Ka-Cheong Leung}, \bibinfo{person}{Victor~OK
  Li}, {and} \bibinfo{person}{Daiqin Yang}.} \bibinfo{year}{2007}\natexlab{}.
\newblock \showarticletitle{An overview of packet reordering in transmission
  control protocol (TCP): Problems, solutions, and challenges}.
\newblock \bibinfo{journal}{\emph{IEEE Transactions on Parallel and Distributed
  Systems}} \bibinfo{volume}{18}, \bibinfo{number}{4} (\bibinfo{year}{2007}),
  \bibinfo{pages}{522--535}.
\newblock


\bibitem[Liu et~al\mbox{.}(2020)]%
        {liu2020memory}
\bibfield{author}{\bibinfo{person}{Zaoxing Liu}, \bibinfo{person}{Samson Zhou},
  \bibinfo{person}{Ori Rottenstreich}, \bibinfo{person}{Vladimir Braverman},
  {and} \bibinfo{person}{Jennifer Rexford}.} \bibinfo{year}{2020}\natexlab{}.
\newblock \showarticletitle{Memory-efficient performance monitoring on
  programmable switches with lean algorithms}. In
  \bibinfo{booktitle}{\emph{Symposium on Algorithmic Principles of Computer
  Systems}}. SIAM, \bibinfo{pages}{31--44}.
\newblock


\bibitem[Meng et~al\mbox{.}(2022)]%
        {meng2022achieving}
\bibfield{author}{\bibinfo{person}{Zili Meng}, \bibinfo{person}{Yaning Guo},
  \bibinfo{person}{Chen Sun}, \bibinfo{person}{Bo Wang},
  \bibinfo{person}{Justine Sherry}, \bibinfo{person}{Hongqiang~Harry Liu},
  {and} \bibinfo{person}{Mingwei Xu}.} \bibinfo{year}{2022}\natexlab{}.
\newblock \showarticletitle{Achieving consistent low latency for wireless
  real-time communications with the shortest control loop}. In
  \bibinfo{booktitle}{\emph{Proceedings of the ACM SIGCOMM 2022 Conference}}.
  \bibinfo{pages}{193--206}.
\newblock


\bibitem[Metwally et~al\mbox{.}(2005)]%
        {metwally2005efficient}
\bibfield{author}{\bibinfo{person}{Ahmed Metwally}, \bibinfo{person}{Divyakant
  Agrawal}, {and} \bibinfo{person}{Amr~El Abbadi}.}
  \bibinfo{year}{2005}\natexlab{}.
\newblock \showarticletitle{Efficient computation of frequent and top-k
  elements in data streams}. In \bibinfo{booktitle}{\emph{International
  Conference on Database Theory}}. Springer, \bibinfo{pages}{398--412}.
\newblock


\bibitem[Morton et~al\mbox{.}(2006)]%
        {morton2006packet}
\bibfield{author}{\bibinfo{person}{Al Morton}, \bibinfo{person}{Len
  Ciavattone}, \bibinfo{person}{Gomathi Ramachandran},
  \bibinfo{person}{Stanislav Shalunov}, {and} \bibinfo{person}{Jerry Perser}.}
  \bibinfo{year}{2006}\natexlab{}.
\newblock \bibinfo{title}{Packet reordering metrics}.
\newblock
\newblock
\newblock
\shownote{RFC 4737}.


\bibitem[Narayana et~al\mbox{.}(2017)]%
        {narayana2017language}
\bibfield{author}{\bibinfo{person}{Srinivas Narayana}, \bibinfo{person}{Anirudh
  Sivaraman}, \bibinfo{person}{Vikram Nathan}, \bibinfo{person}{Prateesh
  Goyal}, \bibinfo{person}{Venkat Arun}, \bibinfo{person}{Mohammad Alizadeh},
  \bibinfo{person}{Vimalkumar Jeyakumar}, {and} \bibinfo{person}{Changhoon
  Kim}.} \bibinfo{year}{2017}\natexlab{}.
\newblock \showarticletitle{Language-directed hardware design for network
  performance monitoring}. In \bibinfo{booktitle}{\emph{ACM SIGCOMM}}.
  \bibinfo{pages}{85--98}.
\newblock


\bibitem[Paxson(1997)]%
        {paxson1997end}
\bibfield{author}{\bibinfo{person}{Vern Paxson}.}
  \bibinfo{year}{1997}\natexlab{}.
\newblock \showarticletitle{End-to-end Internet packet dynamics}.
\newblock \bibinfo{journal}{\emph{IEEE/ACM Transactions on Networking}}
  \bibinfo{volume}{7}, \bibinfo{number}{3} (\bibinfo{date}{June}
  \bibinfo{year}{1997}), \bibinfo{pages}{277--292}.
\newblock


\bibitem[Pensando(2022)]%
        {penando-switch}
\bibfield{author}{\bibinfo{person}{Pensando}.} \bibinfo{year}{2022}\natexlab{}.
\newblock \bibinfo{title}{{Smart Switches}}.
\newblock
  \bibinfo{howpublished}{\url{https://www.amd.com/system/files/documents/pensando-smartswitches.pdf}}.
\newblock


\bibitem[Pensando(2023)]%
        {pensando-nic}
\bibfield{author}{\bibinfo{person}{Pensando}.} \bibinfo{year}{2023}\natexlab{}.
\newblock \bibinfo{title}{{AMD Pensando Infrastructure Accelerators}}.
\newblock
  \bibinfo{howpublished}{\url{https://www.amd.com/en/accelerators/pensando}}.
\newblock


\bibitem[Phaal et~al\mbox{.}(2001)]%
        {sflow}
\bibfield{author}{\bibinfo{person}{P. Phaal}, \bibinfo{person}{S. Panchen},
  {and} \bibinfo{person}{N. McKee}.} \bibinfo{year}{2001}\natexlab{}.
\newblock \showarticletitle{{InMon Corporation's sFlow: A Method for Monitoring
  Traffic in Switched and Routed Networks}}.
\newblock \bibinfo{journal}{\emph{{RFC 3176}}} (\bibinfo{date}{September}
  \bibinfo{year}{2001}).
\newblock


\bibitem[Sengupta et~al\mbox{.}(2022)]%
        {sengupta2022continuous}
\bibfield{author}{\bibinfo{person}{Satadal Sengupta}, \bibinfo{person}{Hyojoon
  Kim}, {and} \bibinfo{person}{Jennifer Rexford}.}
  \bibinfo{year}{2022}\natexlab{}.
\newblock \showarticletitle{Continuous in-network round-trip time monitoring}.
  In \bibinfo{booktitle}{\emph{ACM SIGCOMM}}. \bibinfo{pages}{473--485}.
\newblock


\bibitem[Sivaraman et~al\mbox{.}(2017)]%
        {sivaraman2017heavy}
\bibfield{author}{\bibinfo{person}{Vibhaalakshmi Sivaraman},
  \bibinfo{person}{Srinivas Narayana}, \bibinfo{person}{Ori Rottenstreich},
  \bibinfo{person}{Shan Muthukrishnan}, {and} \bibinfo{person}{Jennifer
  Rexford}.} \bibinfo{year}{2017}\natexlab{}.
\newblock \showarticletitle{Heavy-hitter detection entirely in the data plane}.
  In \bibinfo{booktitle}{\emph{ACM SIGCOMM Symposium on SDN Research}}.
  \bibinfo{pages}{164--176}.
\newblock


\bibitem[Sonchack et~al\mbox{.}(2021)]%
        {sonchack2021lucid}
\bibfield{author}{\bibinfo{person}{John Sonchack}, \bibinfo{person}{Devon
  Loehr}, \bibinfo{person}{Jennifer Rexford}, {and} \bibinfo{person}{David
  Walker}.} \bibinfo{year}{2021}\natexlab{}.
\newblock \showarticletitle{Lucid: A language for control in the data plane}.
  In \bibinfo{booktitle}{\emph{ACM SIGCOMM Conference}}.
  \bibinfo{pages}{731--747}.
\newblock


\bibitem[Suricata(2018)]%
        {ebpf}
\bibfield{author}{\bibinfo{person}{Suricata}.} \bibinfo{year}{2018}\natexlab{}.
\newblock \bibinfo{title}{{Suricata - eBPF and XDP}}.
\newblock
  \bibinfo{howpublished}{\url{https://suricata.readthedocs.io/en/latest/capture-hardware/ebpf-xdp.html}}.
\newblock


\bibitem[Wang et~al\mbox{.}(2004)]%
        {wang2004study}
\bibfield{author}{\bibinfo{person}{Yi Wang}, \bibinfo{person}{Guohan Lu}, {and}
  \bibinfo{person}{Xing Li}.} \bibinfo{year}{2004}\natexlab{}.
\newblock \showarticletitle{A study of Internet packet reordering}. In
  \bibinfo{booktitle}{\emph{International Conference on Information
  Networking}}. Springer, \bibinfo{pages}{350--359}.
\newblock


\bibitem[Xilinx(2023)]%
        {xilinx}
\bibfield{author}{\bibinfo{person}{Xilinx}.} \bibinfo{year}{2023}\natexlab{}.
\newblock \bibinfo{title}{{Adaptive Infrastructure Acceleration}}.
\newblock
  \bibinfo{howpublished}{\url{https://www.xilinx.com/applications/data-center/network-acceleration.html}}.
\newblock


\bibitem[Zheng et~al\mbox{.}(2022)]%
        {zheng2022unbiased}
\bibfield{author}{\bibinfo{person}{Yufei Zheng}, \bibinfo{person}{Xiaoqi Chen},
  \bibinfo{person}{Mark Braverman}, {and} \bibinfo{person}{Jennifer Rexford}.}
  \bibinfo{year}{2022}\natexlab{}.
\newblock \showarticletitle{Unbiased Delay Measurement in the Data Plane}. In
  \bibinfo{booktitle}{\emph{Symposium on Algorithmic Principles of Computer
  Systems (APOCS)}}. SIAM, \bibinfo{pages}{15--30}.
\newblock


\end{thebibliography}

\appendix
\newpage
\section{Track heavy flows over long periods} \label{sec:precision}

\begin{figure*}[t]
\centering
\includegraphics[width=1\textwidth]{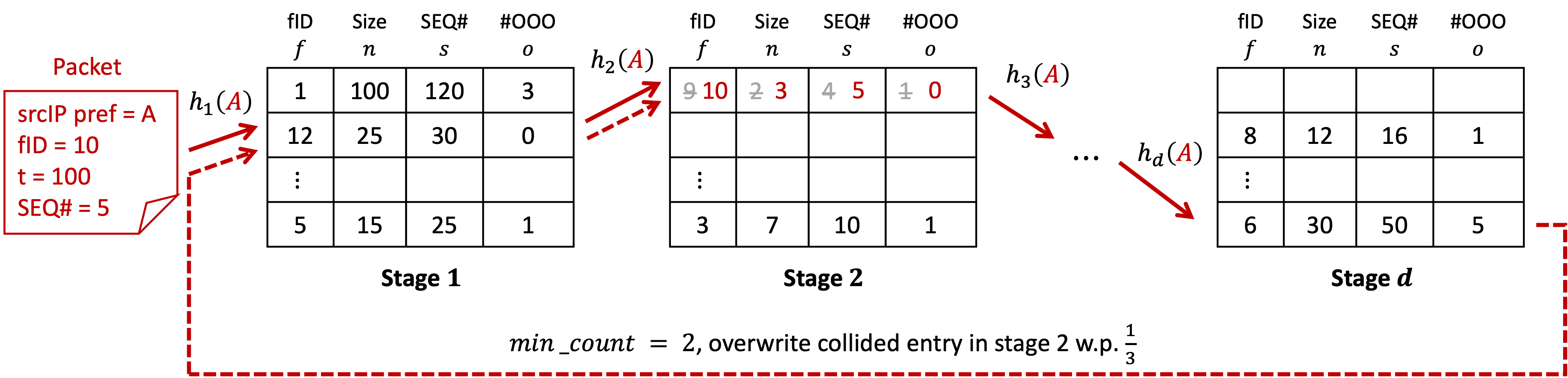}
\caption{
A modification of PRECISION for tracking out-of-order packets.
}
\label{fig:precision}
\end{figure*}

\begin{figure*}[t]
\centering
\includegraphics[width=\textwidth]{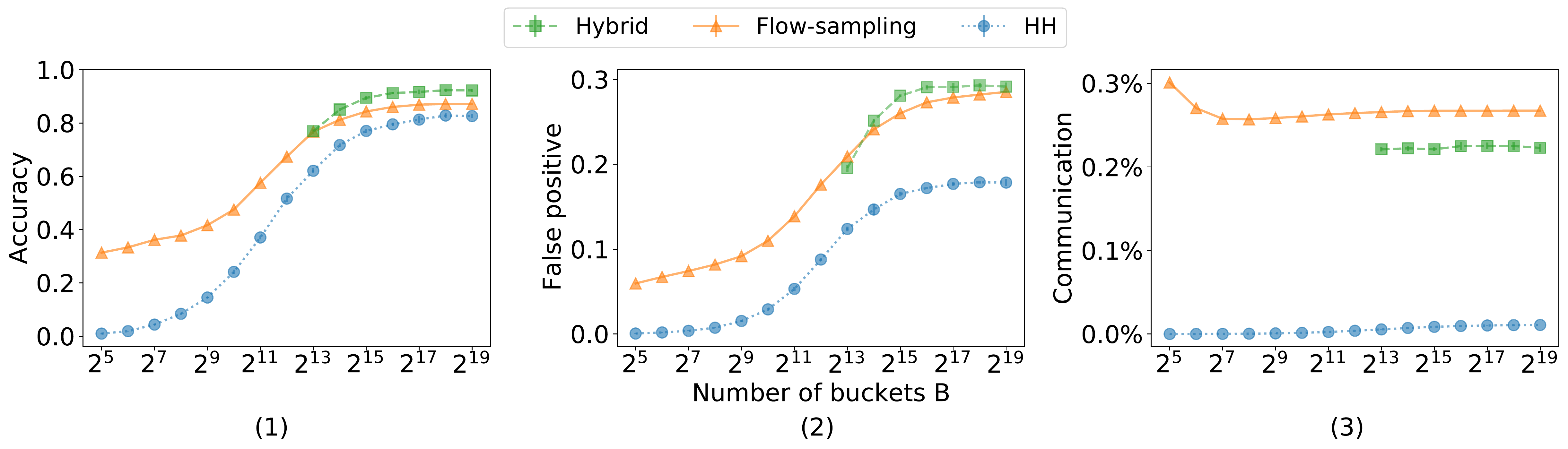}
\vspace{-2em}
\caption{Performance of the flow-sampling algorithm, the hybrid scheme and the HH data structure on a $10$-minute CAIDA 2019 trace for ~\ref{def:ooo_inc}.
}
\label{fig:arr_hybrid_hh_caida}
\end{figure*}

\begin{figure*}[t]
\centering
\begin{subfigure}[t]{0.32\linewidth}
\centering
\includegraphics[width=\textwidth]{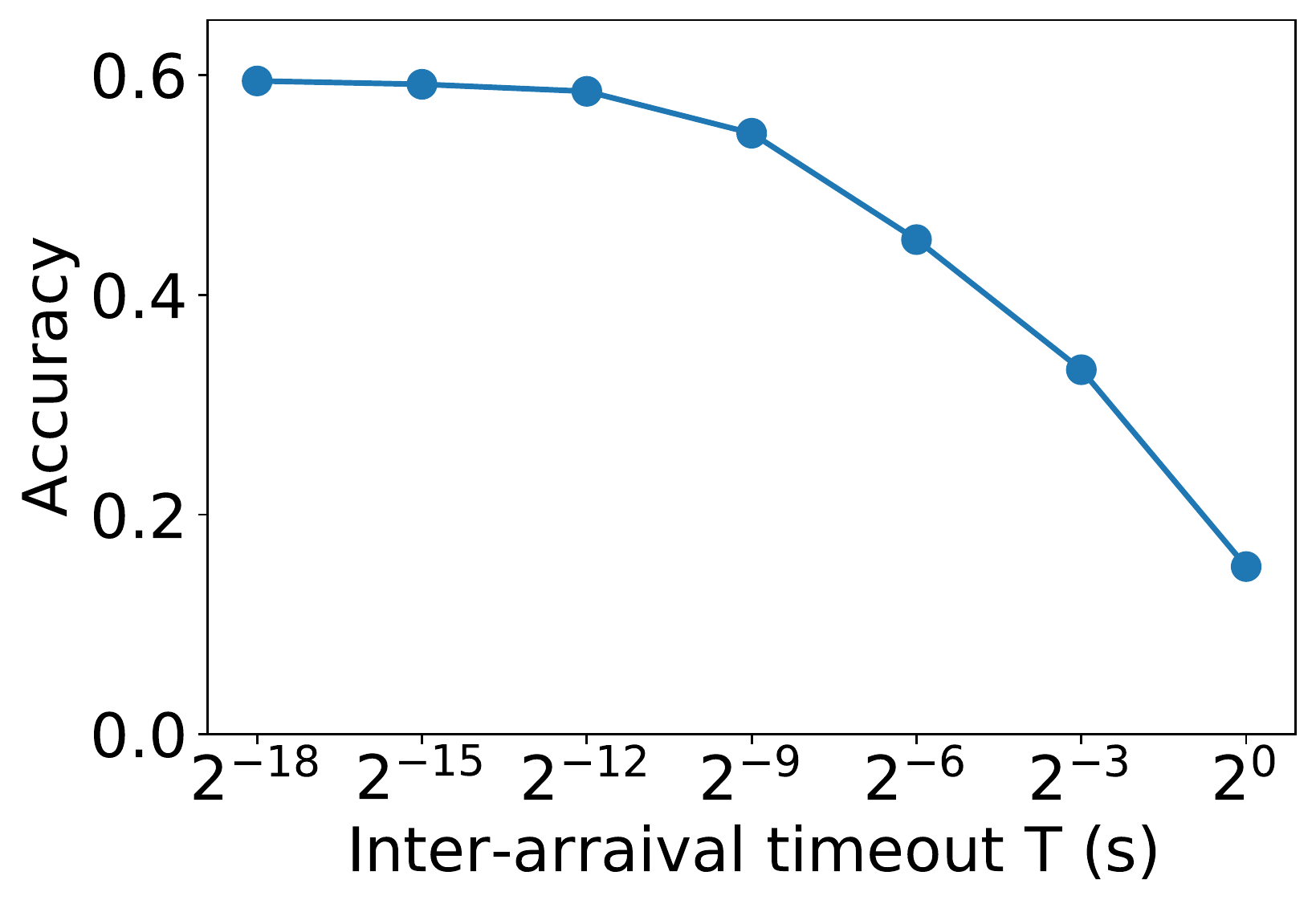}
\caption{The accuracy of the flow-sampling algorithm with varying $T$, and fixed $B = 2^8$, $R = 1$ and $C = 10^8$.}
\label{fig:T}
\end{subfigure} \hfill
\begin{subfigure}[t]{0.32\linewidth}
\centering
\includegraphics[width=\textwidth]{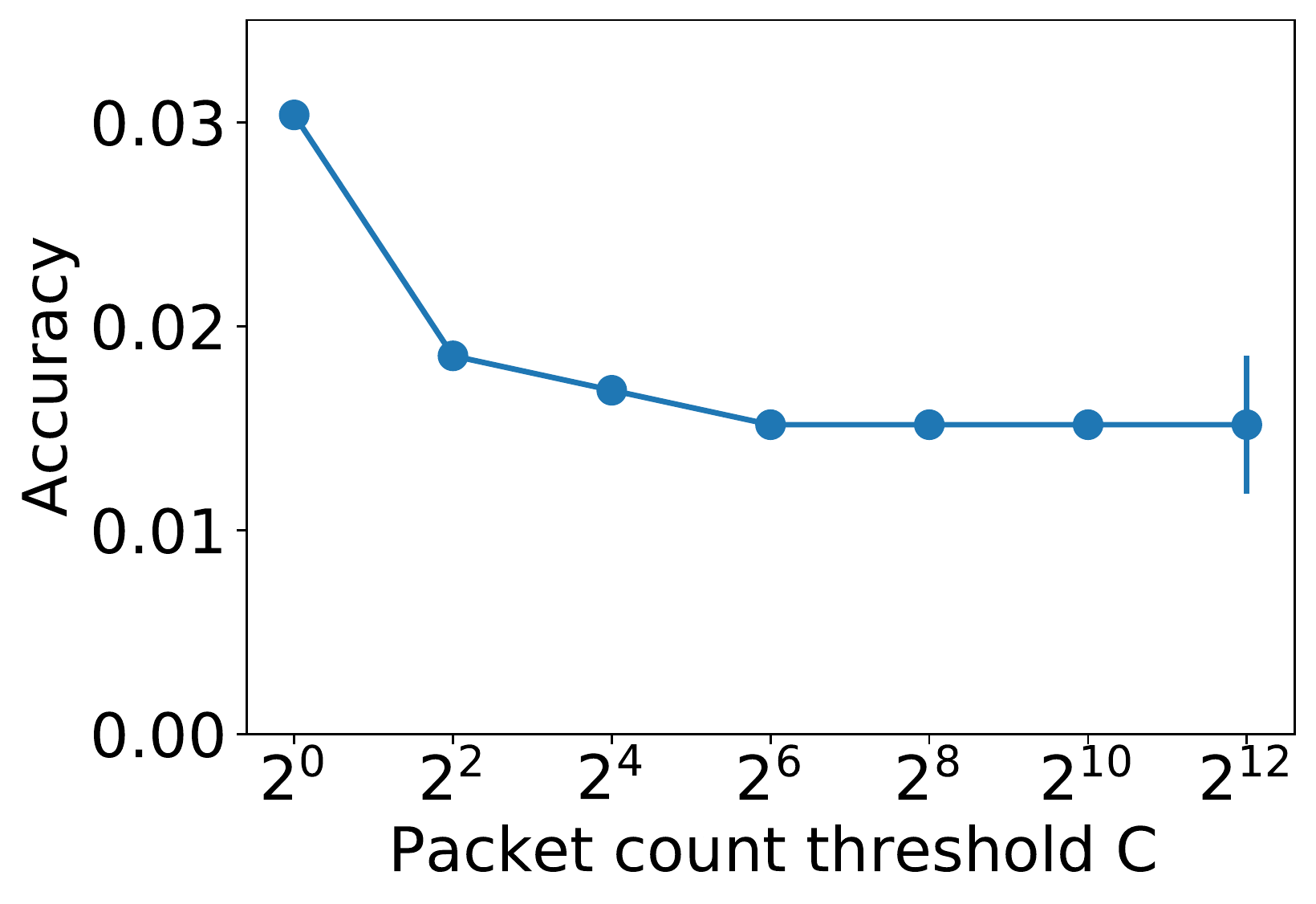}
\caption{The accuracy of the flow-sampling algorithm with varying $C$, with fixed $B = 2^8$, $R = 1$ and $T = 2^5$.}
\label{fig:C}
\end{subfigure} \hfill
\begin{subfigure}[t]{0.32\linewidth}
\centering
\includegraphics[width=\textwidth]{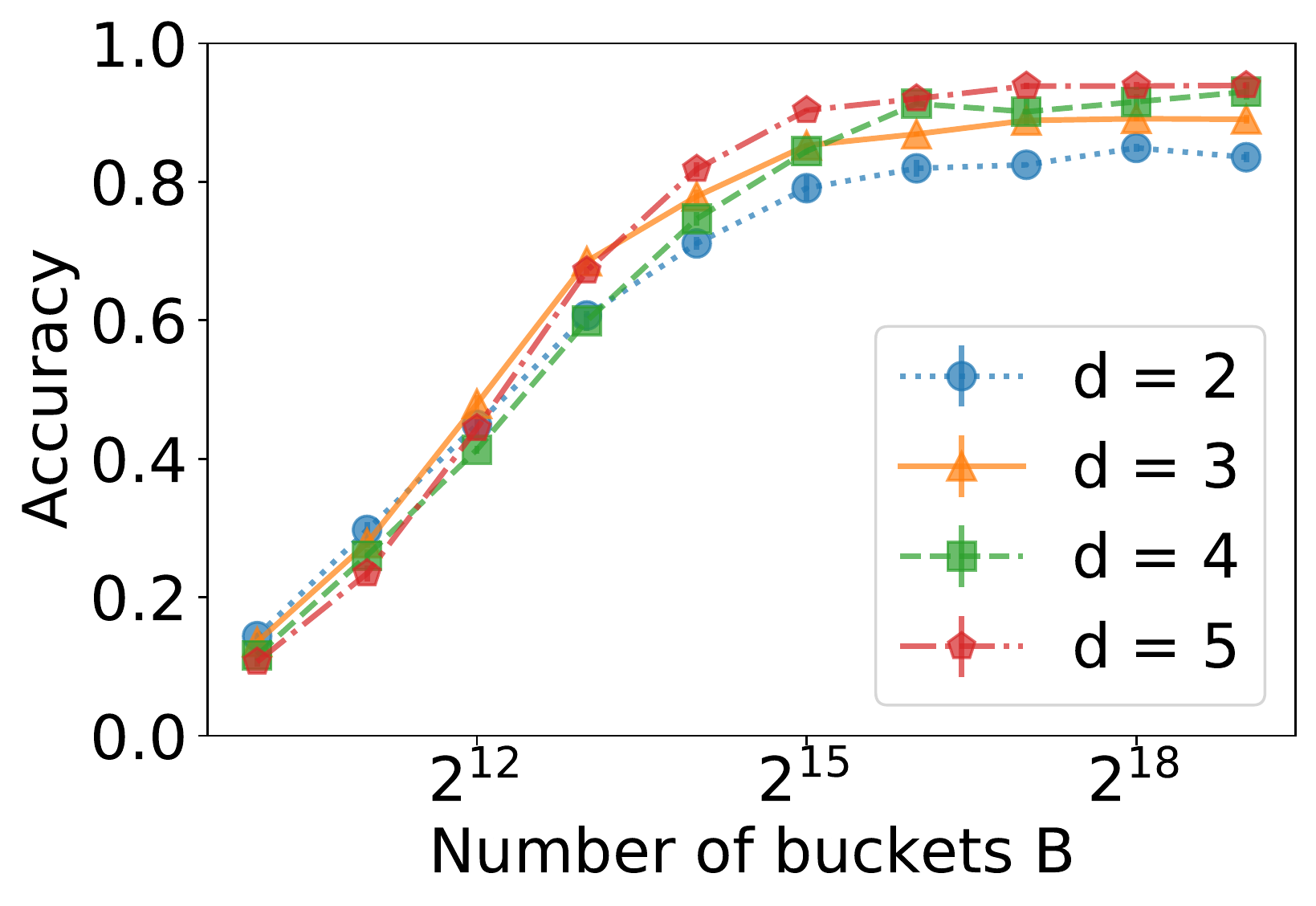}
\caption{The accuracy of the flow-sampling algorithm with varying $d$, with fixed $R = 0.01$.}
\label{fig:d}
\end{subfigure}
\caption{The effect of changing parameters on the accuracy of the flow-sampling algorithm and PRECISION.}
\end{figure*}

To capture out-of-orderness in heavy flows, we want a data structure that is capable of simultaneously tracking heaviness and reordering.
The SpaceSaving~\cite{metwally2005efficient} data structure fits naturally for the task, as we can maintain extra state for each flow record, while the data structure gradually identifies the flows with heavy volume by keeping estimates of their traffic counts.
However, when overwriting a flow record to admit a new flow, SpaceSaving needs to go over all entries to locate the flow with the smallest traffic count, which makes it infeasible for the data plane due to the constraint on the number of memory accesses per packet.

Thus, we opt for PRECISION~\cite{basat2020designing}, the data-plane adaptation of SpaceSaving, which checks only a small number of $d$ entries when overwriting a flow record.
We emphasize that the specifics about how PRECISION works are not, in fact, important in this context.
It is enough to bear in mind that with a suitable data-plane friendly heavy-hitter algorithm, tracking reordering is exactly the same as in the strawman solution (\S~\ref{sec:prob:flow:strawman}), but applied only to heavy flows.
Figure~\ref{fig:precision} shows the modified PRECISION for tracking out-of-order packets using $d$ stages.

We again assign flows from the same prefix to the same set of buckets, by hashing prefixes instead of flow IDs.
In a PRECISION data structure with $d$ stages, at the end of the stream, at most $d$ heaviest flows from each prefix $g$ would remain in memory.
Doing so effectively frees up buckets that used to be taken by a few prefixes with many heavy flows, and allows more prefixes to have their heaviest flows measured.

\section{Supplementary evaluation} \label{sec:supp_eval}

\paragraph{More on performance evaluation}

To show that the overall trend of the performance curves in Figure~\ref{fig:arr_hybrid_hh} is representative, Figure~\ref{fig:arr_hybrid_hh_caida} presents the performance of proposed algorithms on a $10$-minute CAIDA 2019~\cite{caida19} trace using ~\ref{def:ooo_inc}.
The trace contains $61,791,947$ server-to-client packets that come from $2,717,709$ flows and $54,148$ $24$-bit source IP prefixes.

The flow-sampling algorithm remains the most effective when given a small amount of memory, and the hybrid scheme achieves the best accuracy when more memory is available.
Interestingly, in this case, the accuracy of the flow-sampling algorithm dominates that of the HH data structure even when the memory is comparable to the number of prefixes.

\paragraph{More on parameter robustness}

It is observed in~\cite{basat2020designing} that a small constant $d > 1$ only incurs minimal accuracy loss in finding heavy flows.
Increasing $d$ leads to diminishing gains in performance, and adds the number of pipeline stages when implemented on the hardware.
Therefore, $d=2$ is preferable for striking a balance between  accuracy and hardware resources.

Building on~\cite{basat2020designing}, we evaluate PRECISION for $d=2, 3, 4, 5$, for reporting out-of-order heavy prefixes.
The results in Figure~\ref{fig:d} show that when the total memory is small, using fewer stages provides a slight benefit.
The opposite holds when there is ample memory.
However, as the performance gap using different $d$ is insignificant, we also suggest using $d=2$ for hardware implementations.




\end{document}